\documentclass[conference,letterpaper]{IEEEtran}

\usepackage[utf8]{inputenc}
\usepackage[T1]{fontenc}
\usepackage{url}
\usepackage{ifthen}
\usepackage{cite}
\usepackage[cmex10]{amsmath} 
\usepackage{amssymb,amsfonts}
\usepackage{bbm}
\usepackage{graphicx}
\usepackage{subcaption}
\usepackage{placeins}
\usepackage{tikz}
\usepackage{bm}
\usetikzlibrary{arrows.meta}
\usepackage{booktabs}

\newtheorem{theorem}{Theorem}

\newtheorem{corollary}{Corollary}

\newtheorem{remark}{Remark}
\newtheorem{proposition}{Proposition}

\newenvironment{proof}[1][Proof]{\noindent\textbf{#1.} }{\hfill$\blacksquare$\par}

\newcommand{\bDelta}{\bm \Delta}
\newcommand{\bDeltas}{\bm \Delta_s}

\newcommand{\Bi}{B_{\mathbf{i}}}

\newcommand{\ib}{\mathbf{i}}
\newcommand{\Ri}{R_\ib}
\newcommand{\Hi}{H_\ib}
\newcommand{\Pib}{P_\ib}
\newcommand{\gi}{g_\ib}
\newcommand{\Li}{L_\ib}

\newcommand{\bracket}[1]{\left[ #1 \right]}
\newcommand{\curly}[1]{\left\{ #1 \right\}}
\newcommand{\paren}[1]{\left( #1 \right)}
\newcommand{\abs}[1]{\left| #1 \right|}

\newcommand{\R}[1]{\mathbb{R}^{#1}}
\newcommand{\N}[1]{\mathbb{N}^{#1}}

\newcommand{\norm}[2]{\left|\left| #1 \right|\right|_{#2}}

\newcommand{\Esub}[2]{\mathbb{E}_{#2}\left[ #1 \right]}
\newcommand{\E}[1]{\mathbb{E}\!\left[ #1 \right]}

\newcommand{\indic}[1]{\mathbbm{1}_{\left\{#1\right\}}} 

\newcommand{\gauss}[2]{\mathcal{N}\paren{#1, #2}}

\newcommand{\unif}[2]{\mathrm{Unif}\paren{#1, #2}}

\newcommand{\relent}[2]{D\paren{#1||#2}}

 \def\BibTeX{{\rm B\kern-.05em{\sc i\kern-.025em b}\kern-.08em
    T\kern-.1667em\lower.7ex\hbox{E}\kern-.125emX}}
\newcommand{\xv}{\mathbf{x}}

\newcommand{\Ht}{\tilde{H}}
\newcommand{\Xfp}{X_{fp}}

\newcommand{\Xv}{\mathbf{X}}
\DeclareMathOperator*{\argmin}{arg\,min}

\begin{document}
\title{The Entropy of Floating Point Numbers}

\author{%
  \IEEEauthorblockN{Sultan M. Daniels\textsuperscript{*}, Samuel H. D'Ambrosia\textsuperscript{*}, Michael R. DeWeese, and Anant Sahai}
  \IEEEauthorblockA{
                    University of California, Berkeley, CA 94720\\
                    Email: \{shda, sultan\_daniels, deweese, sahai\}@berkeley.edu}
}

\maketitle

\begin{abstract}
    Here we present an analytic approximation for the entropy of floating-point numbers, along with bounds on the error of this approximation. 
    It is well-known that the differential entropy is tightly linked to the discrete entropy of a uniformly quantized random variable. Our approximation uncovers a different quantity that provides this link for floating-point quantization. 
    Additionally, we prove that the entropy of a floating-point quantized random variable is approximately unchanged under scaling. Closed-form expressions for the floating-point entropy of common distributions are provided and compared to exact results. 

    \textsuperscript{*}SMD and SHD contributed equally to this work. This work was supported in part by the H2H8 Organization, the U.S. Army Research Laboratory and the U.S. Army Research Office under Contract No. W911NF-20-1-0151, and NSF AST-2132700. 
\end{abstract}

\section{Introduction}


Floating-point numbers are a central choice for the discrete representation of continuous variables for scientific computing and machine learning applications, since they can have a large dynamic range without sacrificing precision \cite{muller2010handbookfpn, Goldberg1991_FPN_Intro}.
This representation uses a finite register of bits to map a real value to binary scientific notation. 
\begin{table}[htbp]
    \centering
    \renewcommand{\arraystretch}{1.426}
    \begin{tabular}{@{}c@{}}
    \hline\hline
    Approximate entropy: $\tilde H_s(X_{fp}) = p - \tfrac{1}{2} - \Esub{\log(|X|f_X(X))}{f_X\!\!}$ \\
    \hline\hline
    $X \sim \mathcal{N}(0,\sigma^2)$ \\
    $\tilde H_s(X_{fp}) = p + \tfrac{1}{2} \log(2\pi e) + \tfrac{\gamma_e}{2\ln 2} \approx p + 2.46$ \\
    \hline
    $X \sim$ Uniform$(-a,a)$ \\
    $\tilde H_s(X_{fp}) = p +\tfrac{1}{2} + \log e \approx p + 1.94$ \\
    \hline
    $X \sim$ Gamma$(\alpha,\theta)$ \\
    $\tilde H_s(X_{fp}) = p - \tfrac{1}{2} + \alpha\log(e)\paren{1 - \psi(\alpha)} + \log \Gamma(\alpha)$ \\
    \hline
    $X \sim \chi^2(k)$ \\
    $\tilde H_s(X_{fp}) = p - \tfrac{1}{2} + \tfrac{k}{2}\log(e)\paren{1 - \psi\!\paren{\tfrac{k}{2}}} + \log\Gamma\!\paren{\tfrac{k}{2}}$ \\
    \hline
    $X \sim$ Laplace$(0,b)$ \\
    $\tilde H_s(X_{fp}) = p + \tfrac{1}{2} + \tfrac{1+\gamma_e}{\ln 2}\approx p + 2.78$ \\
    \hline
    $X \sim$ Logistic$(0,s)$ \\
    $\tilde H_s(X_{fp}) = p - \tfrac{1}{2} + \tfrac{2 + \gamma_e - \ln(\pi/2)}{\ln 2} \approx p + 2.57$ \\
    \hline
    $X \sim$ Weibull$(\lambda,k)$ \\
    $\tilde H_s(X_{fp}) = p - \tfrac{1}{2} + \tfrac{1+\gamma_e}{\ln 2} - \log k \approx p + 1.78 - \log k$ \\
    \hline
    $X \sim$ Lognormal$(\mu,\sigma^2)$ \\
    $\tilde H_s(X_{fp}) = p - \tfrac{1}{2} + \log\!\paren{\sigma\sqrt{2\pi e}} \approx p + 1.55 + \log \sigma$ \\
    \hline
    $X \sim$ Pareto$(x_m,\alpha)$ \\
    $\tilde H_s(X_{fp}) = p - \tfrac{1}{2} + \log\!\paren{\tfrac{e}{\alpha}} \approx p + 0.94 - \log \alpha$ \\
    \hline
    $X \sim$ Beta$(\alpha,\beta)$ \\
    $\tilde H_s(X_{fp}) = p - \tfrac{1}{2} + \log B(\alpha,\beta)$ \\
    ${}+ \tfrac{(\alpha+\beta-1)\psi(\alpha+\beta) - \alpha\psi(\alpha) - (\beta-1)\psi(\beta)}{\ln 2}$ \\
    \hline
    $X \sim t_\nu(0,s)$ \\
    $\tilde H_s(X_{fp}) = p - \tfrac{1}{2} + \log B\!\paren{\tfrac{\nu}{2},\tfrac{1}{2}}$ \\
    ${}+ \tfrac{\tfrac{\nu+1}{2}\psi\!\paren{\tfrac{\nu+1}{2}} - \tfrac{\nu}{2}\psi\!\paren{\tfrac{\nu}{2}} - \tfrac{1}{2}\psi\!\paren{\tfrac{1}{2}}}{\ln 2}$ \\
    \hline
    $\Xv \sim \gauss{\mathbf{0}_d}{\Sigma}$ \\
    $\tilde H_s(\Xv_{fp}) = d\paren{p + \tfrac{1}{2}\log(2\pi e) + \tfrac{\gamma_e}{2\ln 2}}$ \\
    ${}+ \tfrac{1}{2}\log\!\paren{\tfrac{\prod_{i=1}^{d}\lambda_i(\Sigma)}{\prod_{i=1}^{d}\Sigma_{ii}}}$ \\
    \hline
    \end{tabular}
    \caption{\textit{Closed-form approximations for the floating-point entropy of common distributions.} Here, $p$ is the precision, $\gamma_e$ is the Euler-Mascheroni constant, $\Gamma(\cdot)$ is the gamma function, $\psi(\cdot)$ is the digamma function, $B(\cdot, \cdot)$ is the beta function, $\mathbf{0}_d$ is the $d$-dimensional zero vector, and $\lambda_i(\Sigma)$ is the $i$-th eigenvalue of the positive semidefinite covariance matrix $\Sigma$. For the multivariate Gaussian case, we assume scalar quantization where each component of the random vector is independently quantized. For the univariate distributions, the scale parameter cancels (see Theorem~\ref{thm:scale_invariance}), and in the multivariate case, the entropy is invariant under transformation of $\Xv$ by a diagonal matrix (see Corollary~\ref{cor:vec_scale}).
    }
    \label{tab:closed_form}
\end{table}
As floating-point numbers are heavily used in modern deep learning implementations and hardware \cite{nvidia_nvidia_2025, advanced_micro_devices_inc_amd_2025}, 
there has been renewed interest in developing novel floating-point formats that find the right tradeoffs between precision, dynamic range, memory, computational speed and numerical stability \cite{micikevicius2022fp8formatsdeeplearning, kuzmin_fp8_2022, rouhani_microscaling_2023, googleBFloat16Secret}. Recent work on compressing neural networks empirically studies the entropy of the exponent and mantissa of each parameter \cite{bordawekar2022efloatentropycodedfloatingpoint, hao_neuzip_2024}, since they cannot be losslessly compressed below the lower bound set by their entropy \cite{shannon_communication_1949}. Additionally, physicists are applying thermodynamic bounds to study the energetic costs of computation, where the Shannon entropy of logical states plays a key role \cite{WolpertStochThermo2024, Wolpert2019, Landauer1961, maroney2008physicalbasisgibbsvonneumann, MaroneyGeneralizingLandauer}. Our results can be applied to bound the energetic cost of modeling algorithms with real-valued input data and model parameters quantized onto floating-point numbers \cite{dambrosiaThermodynamicCostsSimple2026}. Also, in communication-constrained control \cite{nairFeedbackControlData2007, kostinaRatelimitedControlSystems2016, kostinaRateCostTradeoffsControl2019, hanQuantizedOnlineLQR2026}, when control inputs must be quantized, there are settings where log-spaced quantizers are optimal for stabilizing linear systems \cite{eliaStabilizationLinearSystems2001, elia2002QuantizedStabilization, okanoMinimumDataRate2014}. As floating-point quantization is similarly log-spaced, analyzing its entropy can aid in the design of hybrid systems with continuous and discrete components.

In the design of variable-rate quantizers, 
the objective is to find the best quantizer that minimizes entropy and distortion \cite{GrayNeuhoff1998Quantization}. It was shown in \cite{gish_asymptotically_1968} that in the high-resolution limit, as the number of quantization levels is taken to infinity, uniform quantizers achieve this optimality. In contrast, we take a fixed nonuniform quantizer, the floating-point representation, and ask a basic question: for a given continuous distribution, what is the output entropy of the floating-point quantizer? 



We present a predictive analytic approximation for the entropy of floating-point numbers, through extending the exact relation between differential entropy and the discrete entropy of uniform quantizers with finite levels given in \cite{kostina_data_2017} to nonuniform quantizers (Section~\ref{sec:rel_diff_disc}), and applying a smoothing approximation for the floating-point bin size (Section~\ref{sec:smooth}). This allows for closed-form expressions for the entropy of common distributions represented on floating-point numbers (Table~\ref{tab:closed_form}, App.~\ref{app:approx_dists}), which can be compared to the numerically evaluated exact discrete entropy (Figs.~\ref{fig:fp_exact_approx_multi_dist} and~\ref{fig:fp_exact_approx}). Bounds on the error of these approximations are provided in Sections~\ref{sec:precise} and~\ref{sec:smooth}. 
Finally, we prove the scale-invariance property of the analog of the differential entropy for floating-point quantization (Section~\ref{sec:scale}), which predicts that the output entropy of a floating-point representation is left unchanged when the underlying continuous random variable is scaled by a constant. 
This contrasts with the scale dependence of differential entropy.

\begin{figure}[htbp]
  \centering
  \begin{subfigure}{0.24\textwidth}
    \centering
    \includegraphics[width=\linewidth]{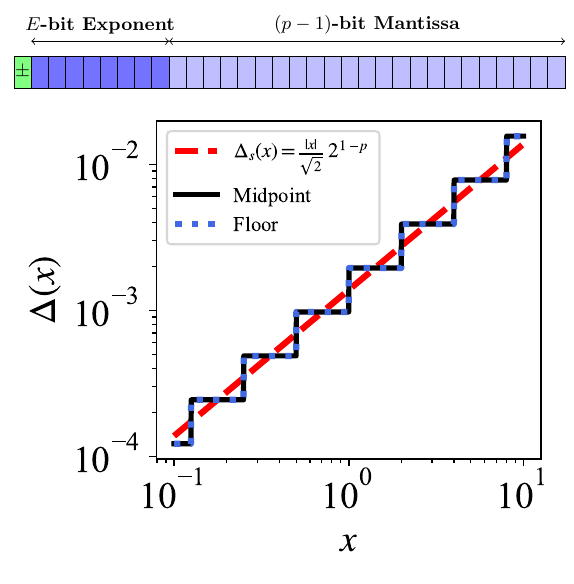}
    \caption{}
    \label{fig:fp_diag}
  \end{subfigure}
    \begin{subfigure}{0.24\textwidth}
    \centering
    \includegraphics[width=\linewidth]{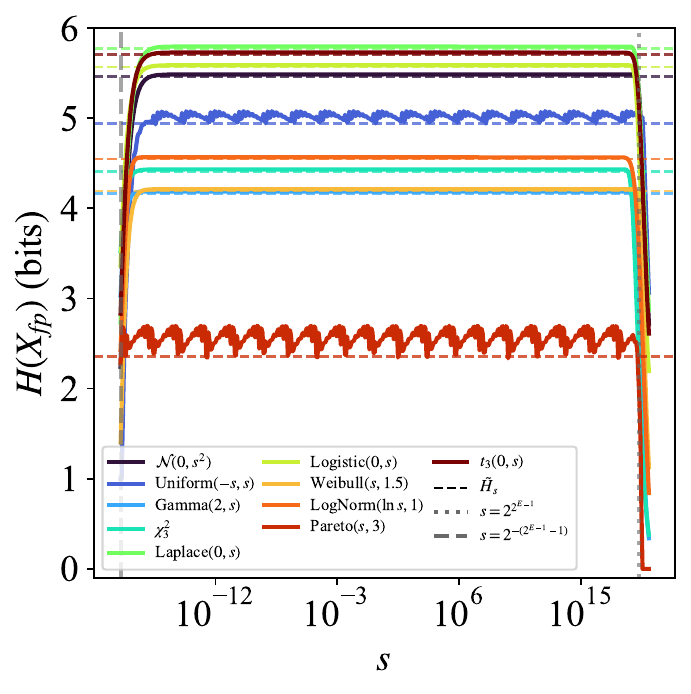}
    \caption{}
    \label{fig:fp_exact_approx_multi_dist}
  \end{subfigure}
\caption{\textit{Floating-point structure and comparison of exact entropy and analytic approximation over the distribution scale parameter}.
(\ref{fig:fp_diag}) The structure of a floating-point number where each box represents one bit. The true bin size function $\Delta(x)$ is plotted on log-log scale for both midpoint (black solid curve) and floor quantization (blue dotted curve) with $p = 10$ and $E=4$ along with the smooth approximation $\Delta_s(x)$ (red dashed curve). 
(\ref{fig:fp_exact_approx_multi_dist}) Comparison between exact entropies and analytic approximations for various distributions (App.~\ref{app:approx_dists}), in terms of a scale parameter $s$ defined in the figure for each distribution, demonstrating excellent agreement inside the underflow and overflow range, $2^{-(2^{E-1}-1)}< s < 2^{2^{E-1}}$. 
}

  \label{fig:fp_struct_and_dists}
\end{figure}

\section{The Structure of a Floating-point Number}
\label{sec:structure}

A normalized floating-point number stores a real value $x \in \R{}$ in the form
\begin{align}
x_{fp}(x) = s_{fp}(x)\times 2^{e_{fp}(x)}\times (1+ m_{fp}(x)),
\end{align}
where $s_{fp}(x)$ corresponds to the sign of the number $x$ encoded by a single bit, $e_{fp}(x)$ is the exponent of the number $x$ encoded using $E$ bits, and $m_{fp}(x)$ is the mantissa or significand specifying the fractional significant digits of $x$ encoded using $p-1$ bits, where $p$ is referred to as the precision of the floating-point number \cite{muller2010handbookfpn, Goldberg1991_FPN_Intro}. Explicit definitions of $s_{fp}(x), e_{fp}(x),$ and $m_{fp}(x)$ are given in Appendix~\ref{app:exact_entropy_float}. This creates a finite set of representable values along the real line. 

The floating-point format can be thought of as a nonuniform quantizer for a continuous random variable $X$. 
The quantization scheme is a map from the real line to $K$ representable values
$\{u_1, u_2, \ldots, u_K\}$, where 
$u_1 < u_2 < \dots < u_K$.
We will refer to the resulting quantized discrete random variable as $X_{fp}$. In this paper, we will consider a simplified floating-point format, with rounding to the nearest representable value, that does not use subnormal numbers (as defined in \cite{muller2010handbookfpn}). The structure of a floating-point number is shown in Fig.~\ref{fig:fp_diag}. See \cite{muller2010handbookfpn, Goldberg1991_FPN_Intro} for further details on the floating-point representation. 

For $E$ exponent bits, $e_{fp}(x)$ is an integer between $e_{min} = -(2^{E-1}-1)$ and $e_{max} = 2^{E-1}$.
Primarily controlled by $e_{max}$, the granular region $\mathbb{U} \triangleq \left[-2^{e_{max} + 1} + 2^{e_{max} - p}, \,\, 2^{e_{max} + 1} - 2^{e_{max} - p}\right]$ is the region of the real line that is near the representable values \cite{GrayNeuhoff1998Quantization}.
Let $\Delta: \mathbb{U} \to \mathbb{R}_+$
encode the bin sizes of the quantization scheme for a single floating-point number.   
Within each exponent interval $[2^{e_{fp}},2^{e_{fp}+1})$, the floating-point format represents $2^{p-1}$ uniformly spaced values. This allows us to define the quantization bin size within an exponent interval as
\begin{equation}
    \label{eq:true_bins}
    \Delta(x) = \frac{2^{e_{fp}+1} - 2^{e_{fp}}}{2^{p-1}}= 2^{e_{fp} - (p-1)} \text{ if } 2^{e_{fp}} \leq x < 2^{e_{fp}+1}.
\end{equation}
Bins at the edges of the granular region and at the boundaries of exponent intervals are handled in Appendix~\ref{app:bin_bounds}. 

We can generalize this to the case of vector quantization, where we assume each component of the vector is independently quantized. Let $\Xv \sim f_\Xv$ be a $d$-dimensional continuous random vector, where $f_\Xv$ is its probability density function. In this case we define $\bDelta : \mathbb{U}^d \rightarrow \R{}_+$, where each quantization bin is a $d$-dimensional rectangle with volume
\begin{equation}
\bDelta(\xv) \triangleq \prod_{j=1}^{d} \Delta(x_j),
\qquad
\log \bDelta(\xv) = \sum_{j=1}^{d} \log \Delta(x_j).
\end{equation}

In this paper, $\log[\cdot]$ is assumed to be base 2. Let $\mathbf{i} = (i_1,\dots,i_d)$ index the $d$-dimensional quantization bins, where $i_j \in \{1,\dots,K\}$. Define the bin boundaries
\begin{equation}
a_i=\begin{cases}
-2^{e_{max} + 1} + 2^{e_{max} - p} & i=1\\[2pt]
\frac{u_i+u_{i-1}}2 & 2\leq i\leq K,
\end{cases}
\end{equation}
\begin{equation}
\qquad
b_i=\begin{cases}
\frac{u_{i+1}+u_i}2 & 1\leq i\leq K-1\\[2pt]
2^{e_{max} + 1} - 2^{e_{max} - p} & i=K,
\end{cases}
\end{equation}
where for each $d$-dimensional index $\mathbf{i}$ we can define the quantization bin
\begin{equation}
B_{\mathbf{i}} \triangleq \prod_{j=1}^d [a_{i_j}, b_{i_j}].
\end{equation}
Let
\begin{equation}
p_{\mathbf{i}} \triangleq \mathbb{P}\{\Xv\in B_{\mathbf{i}}\}
= \mathbb{P}\{\Xv_{fp} = \mathbf{u}_{\mathbf{i}}\},
\end{equation}
where $\mathbf{u}_{\mathbf{i}}$ is the $d$-dimensional representable value corresponding to the quantization bin $B_\mathbf{i}$, and $\Xv_{fp}$ is the discrete random vector representing the floating-point quantization of $\Xv$.

Then the discrete entropy of $\Xv_{fp}$ is
\begin{equation}
H(\Xv_{fp}) = -\sum_{\mathbf{i}} p_{\mathbf{i}} \log p_{\mathbf{i}}.\label{eq:exact_entropy}
\end{equation}
See Appendix~\ref{app:exact_entropy_float} for a more detailed discussion of the exact entropy calculation used in Figures~\ref{fig:fp_exact_approx_multi_dist} and~\ref{fig:fp_exact_approx}.

\begin{table*}[!t]
    \centering
    \renewcommand{\arraystretch}{2}
    \begin{tabular}{@{}cccc@{}}
    \hline\hline
    & Kostina \cite{kostina_data_2017} & This paper & Equation\\
    \hline\hline
     Differential $\leftrightarrow$ discrete & $h(\Xv) - \log \bDelta + \relent{f_\Xv}{g}$ & $h(\Xv) - \Esub{\log \bDelta(\Xv)}{f_\Xv} + \relent{f_\Xv}{g}$ & Eq.~\eqref{eq:exact_identity_main}\\
    \hline
    $\relent{f_\Xv}{g} \leq $ &$\bDelta\cdot\Esub{\max\limits_{u \in B_\ib(\Xv)}\norm{\nabla \ln f_\Xv(u)}{2}}{f_\Xv}$ & $\sum\limits_{\ib}p_\ib
    \int_1^{\Lambda_\ib}
    \paren{\log \lambda+\log e}
    \Li(\lambda)
    \,d\lambda$ & Theorem~\ref{thm:continuum_bounds_main} \\
    \hline
    $\relent{f_\Xv}{g} \geq $ & N/A & $\sum\limits_{\ib}p_\ib \max\limits_{\lambda \geq 1} \bracket{
\lambda\log \lambda
-
(\lambda-1)\log e
}\Li(\lambda)$ & Theorem~\ref{thm:continuum_bounds_main} \\
    \hline
    \end{tabular}
    \caption{\textit{Comparing and contrasting to Kostina \cite{kostina_data_2017}}. We extend the relation of differential entropy to the discrete entropy of the quantized output to the setting of nonuniform bins. $\bDelta$ denotes constant bin size, while $\bDelta(\Xv)$ denotes a bin size that depends on the value of $\Xv$. $\Bi(\Xv)$ denotes the bin that $\Xv$ gets quantized into. The upper bound on $\relent{f_\Xv}{g}$ in \cite{kostina_data_2017} uses the first-order information of the density to provide a smoothness condition that bounds the KL divergence for high-resolution limit analysis. Theorem~\ref{thm:continuum_bounds_main} provides an upper bound, and a similar lower bound, by characterizing the peaks of the density in each bin.}
    \label{tab:kostina}
\end{table*}

\section{Relation of differential entropy to discrete floating-point entropy}\label{sec:rel_diff_disc}

We can extend the analysis of \cite{kostina_data_2017} to the case of quantization with nonuniform bins, to derive an exact identity relating the discrete entropy in \eqref{eq:exact_entropy} to the differential entropy of $\Xv$. Define the piecewise-uniform density
\begin{equation}
g(\xv)
\triangleq
\sum_{\mathbf{i}}
\frac{p_{\mathbf{i}}}{\abs{\Bi}}
\indic{\xv\in B_{\mathbf{i}}},
\label{eq:piecewise_uniform_g_main}
\end{equation}
where $\abs{\Bi} = \bDelta(\xv)$ for $\xv \in B_{\mathbf{i}}$ is the volume of each rectangular bin. By construction, $g$ is constant on each bin and integrates to one across all bins.

By the definition of Kullback-Leibler divergence,
\begin{align}
\relent{f_{\Xv}}{g}
&=
-h(\Xv)
-
\int_{\mathbb{U}^d}
f_{\Xv}(\xv)\log g(\xv)
\,d\xv.
\label{eq:KL_start_main}
\end{align}
Since $g(\xv) = p_{\mathbf{i}}/\abs{\Bi}$ for $\xv\in B_{\mathbf{i}}$,
\begin{align}
\int_{\mathbb{U}^d}
f_{\Xv}(\xv)\log g(\xv)
\,d\xv
&=
\sum_{\mathbf{i}} p_{\mathbf{i}}\log p_{\mathbf{i}}
-
\sum_{\mathbf{i}} p_{\mathbf{i}}\log \abs{\Bi}.
\label{eq:KL_middle_main}
\end{align}
Substituting Eq.~\eqref{eq:KL_middle_main} into Eq.~\eqref{eq:KL_start_main} yields
\begin{align}
\relent{f_{\Xv}}{g}
&=
-h(\Xv)
-
\sum_{\mathbf{i}} p_{\mathbf{i}}\log p_{\mathbf{i}}
+
\sum_{\mathbf{i}} p_{\mathbf{i}}\log \abs{\Bi}
\notag\\
&=
-h(\Xv)
+
H(\Xv_{fp})
+
\mathbb{E}[\log \bDelta(\Xv)],
\end{align}
or equivalently,
\begin{equation}
\boxed{
H(\Xv_{fp})
=
h(\Xv)
-
\mathbb{E}[\log \bDelta(\Xv)]
+
\relent{f_{\Xv}}{g}.
}
\label{eq:exact_identity_main}
\end{equation}

\subsection{A first approximation}
The first approximation made corresponds to neglecting the nonnegative correction term $\relent{f_{\Xv}}{g}$:
\begin{equation}
\boxed{
\Ht(\Xv_{fp})
\triangleq
h(\Xv)
-
\mathbb{E}[\log \bDelta(\Xv)].
}
\label{eq:approx1_clean_main}
\end{equation}

\subsection{Distributions that lead to small approximation error}\label{sec:precise}
Since \eqref{eq:approx1_clean_main} neglects the KL divergence term, 
two questions arise: what properties of $f_{\Xv}$ ensure that this correction term is small, and what properties make this term large?
Since $\relent{f_\Xv}{g}$ captures how far the density $f_\Xv$ is from uniform on each bin, it will be shown that ``spikiness'' is the key property that determines this term, with spikes in small bins less harmful than spikes in large bins. Theorem~\ref{thm:continuum_bounds_main} makes this precise and Remark~\ref{rem:single_peak} provides more geometric intuition.

For $\xv\in \Bi$ with $p_\ib > 0$, let $\gi \triangleq g(\xv)$ and define the relative density ratio by
$
    \Ri(\xv)
    \triangleq
    \frac{f_\Xv(\xv)}{\gi}.
$ 
For $\lambda > 0$, define
\begin{align}
    \Li(\lambda)
    \triangleq
    \frac{1}{\abs{\Bi}}\abs{
        \curly{
            \xv\in \Bi:
            \Ri(\xv)\ge \lambda
        }
    },
\end{align}
which encodes the normalized width of $\Ri$'s $\lambda$-superlevel set. 

As the average over a bin $\gi$ follows the absolute peak density in that bin, if $\gi$ is held constant, a peak of $\Ri(\xv)$ of a certain height, then constrains the peak's width. The following proposition proves that a relative peak of height $\lambda \geq 1$ can occupy at most $1/\lambda$ of the bin width $|\Bi|$.

\begin{proposition}[Peak height-width feasibility bound]\label{lem:feasibility}
For every bin $\Bi$ where $p_\ib > 0$,
\begin{align}
    \Esub{\Ri(\Xv)}{\mathrm{Unif}(\Bi)} = 1.
\end{align}
If $\lambda \geq 1$, this induces the peak height-width feasibility bound
\begin{align}
    \Li(\lambda)
    \le
    \frac{1}{\lambda}.
\end{align}
\end{proposition}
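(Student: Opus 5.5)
The first claim follows directly from the definitions, and the second from Markov's inequality.

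\textbf{Step 1: the uniform average equals one.} Fix a bin $\Bi$ with $p_\ib>0$. By the definition of $g$ in \eqref{eq:piecewise_uniform_g_main}, $g$ is constant on $\Bi$ with value $\gi = p_\ib/\abs{\Bi}$. This is strictly positive, so $\Ri = f_\Xv/\gi$ is well defined on $\Bi$. Writing the expectation under $\mathrm{Unif}(\Bi)$ as an integral gives
\begin{align}
\Esub{\Ri(\Xv)}{\mathrm{Unif}(\Bi)}
= \frac{1}{\abs{\Bi}}\int_{\Bi} \frac{f_\Xv(\xv)}{\gi}\,d\xv
= \frac{1}{\abs{\Bi}}\cdot\frac{\abs{\Bi}}{p_\ib}\int_{\Bi} f_\Xv(\xv)\,d\xv
= \frac{p_\ib}{p_\ib} = 1 .
\end{align}
The last step uses $p_\ib=\mathbb{P}\{\Xv\in\Bi\}=\int_{\Bi}f_\Xv$. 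The bins share only boundary faces, which have Lebesgue measure zero, so this value of $p_\ib$ is unaffected by how boundaries are assigned.

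\textbf{Step 2: the feasibility bound.} Let $\xv\sim\mathrm{Unif}(\Bi)$. Then $\Li(\lambda)$ is exactly $\mathbb{P}_{\mathrm{Unif}(\Bi)}\{\Ri(\xv)\ge\lambda\}$, because the normalized Lebesgue measure of a subset of $\Bi$ is its probability under the uniform law. Since $f_\Xv\ge 0$, the ratio $\Ri$ is nonnegative. Markov's inequality together with Step 1 then yields
\begin{align}
\Li(\lambda)\le \frac{\Esub{\Ri(\Xv)}{\mathrm{Unif}(\Bi)}}{\lambda}=\frac1\lambda
\end{align}
for every $\lambda>0$, and in particular for $\lambda\ge 1$. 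The restriction $\lambda\ge1$ only matters for the bound to be nontrivial, since $\Li\le 1$ always holds.

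\textbf{Main obstacle.} Neither step is difficult. The only points needing care are measure-theoretic: $\Ri$ is measurable and nonnegative, $\gi>0$ because $p_\ib>0$, and overlapping bin boundaries are null sets. I would state these explicitly before invoking Markov's inequality.
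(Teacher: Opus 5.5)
Your proof is correct and follows the paper's argument. The paper computes the uniform average of $\Ri$ exactly as you do, and then gets $\Li(\lambda)\le 1/\lambda$ by restricting the integral of $\Ri$ to the superlevel set and using $\Ri\ge\lambda$ there, which is Markov's inequality written out inline (your remark that the bound holds, trivially, for all $\lambda>0$ is also right).
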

\begin{proof}
    First, observe that
\begin{align}
    &\Esub{\Ri(\Xv)}{\mathrm{Unif}(\Bi)} =
    \frac{1}{\abs{\Bi}}
    \int_{\Bi}\Ri(\xv)\,d\xv\\
    &=
    \frac{1}{\abs{\Bi}}
    \int_{\Bi}\frac{f_\Xv(\xv)}{\gi}\,d\xv
    =
    \frac{1}{\abs{\Bi} \gi}
    \int_{\Bi}f_\Xv(\xv)\,d\xv
    \\
    &=
    \frac{p_\ib}{\abs{\Bi} \gi}
    =
    1.
\end{align}

Let $\pi_\ib(\lambda) \triangleq \curly{\xv \in \Bi: \Ri(\xv)\ge \lambda}$. For $\lambda\ge 1$, by the definition of $\Li$,
\begin{align}
    1
    &=
    \frac{1}{\abs{\Bi}}
    \int_{\Bi}\Ri(\xv)\,d\xv 
    \ge
    \frac{1}{\abs{\Bi}}
    \int_{\pi_\ib(\lambda)}
    \Ri(\xv)\,d\xv
    \\
    &\ge
    \lambda
    \frac{1}{\abs{\Bi}}
    \abs{\pi_\ib(\lambda)}
    =
    \lambda \Li(\lambda).
\end{align}
Therefore
$
    \Li(\lambda)
    \le
    \frac{1}{\lambda}.
$
\end{proof}

Now, intuitively, we upper bound the KL divergence by integrating over relative peak heights $\lambda$ on a per bin basis. 
The lower bound integrates the minimum contribution to the per bin KL divergence of all peaks with relative heights of one up to $\lambda^\star$, over the region containing the peaks of height at least $\lambda^\star$, where $\lambda^\star$ maximizes the tradeoff between peak height and width.
\begin{theorem}[Bounds on the KL divergence]\label{thm:continuum_bounds_main} Under the convention $0\log 0 = 0$, assume
\begin{align}
    \Lambda_\ib
    \triangleq
    \operatorname*{ess\,sup}_{\xv\in \Bi} \Ri(\xv)
    =
    \frac{1}{\gi}
    \operatorname*{ess\,sup}_{\xv\in \Bi} f_\Xv(\xv) < \infty.
\end{align}

Then, 
\begin{align}
    &\sum\limits_{\ib}p_\ib
\max_{\lambda \geq 1} \bracket{
\lambda\log \lambda
-
(\lambda-1)\log e
}
\Li(\lambda)\\
&\leq \relent{f_\Xv}{g}
    \le
    \sum\limits_{\ib}p_\ib
    \int_1^{\Lambda_\ib}
    \paren{\log \lambda+\log e}
    \Li(\lambda)
    \,d\lambda .
\end{align}
\end{theorem}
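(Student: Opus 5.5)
The plan is to decompose the KL divergence bin by bin, rewrite each bin's contribution as an expectation of a function of the relative density ratio $\Ri$ under $\mathrm{Unif}(\Bi)$, and then apply layer-cake arguments to that expectation. On $\Bi$ we have $f_\Xv = \gi \Ri$. Bins with $p_\ib = 0$ have $f_\Xv = 0$ a.e. on them and contribute nothing under the $0\log 0 = 0$ convention. Therefore
\begin{align}
\relent{f_\Xv}{g} = \sum_{\ib} \int_{\Bi} \gi \Ri(\xv)\log \Ri(\xv)\,d\xv = \sum_{\ib} p_\ib\, \Esub{\phi(\Ri(\Xv))}{\mathrm{Unif}(\Bi)},
\end{align}
where $\phi(r) = r\log r$ and we used $\gi\abs{\Bi} = p_\ib$. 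It therefore suffices to bound $D_\ib \triangleq \Esub{\phi(\Ri)}{\mathrm{Unif}(\Bi)}$ for each bin, above and below, by the corresponding summands.

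\textbf{Upper bound.} For $r \le 1$ we have $\phi(r) \le 0$, so $D_\ib \le \Esub{\phi(\Ri)\indic{\Ri \ge 1}}{\mathrm{Unif}(\Bi)}$. For $r \ge 1$, write $\phi(r) = \int_1^{r} (\log\lambda + \log e)\,d\lambda = \int_1^{\Lambda_\ib} (\log\lambda+\log e)\indic{r\ge\lambda}\,d\lambda$, which is valid a.e.\ since $\Ri \le \Lambda_\ib$ a.e. The integrand is nonnegative, so Tonelli allows us to swap the expectation and the $\lambda$-integral. Since $\mathbb{P}_{\mathrm{Unif}(\Bi)}\{\Ri \ge \lambda\} = \Li(\lambda)$, this gives exactly $\int_1^{\Lambda_\ib}(\log\lambda+\log e)\Li(\lambda)\,d\lambda$. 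The assumption $\Lambda_\ib < \infty$ keeps everything finite.

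\textbf{Lower bound.} Here we first use Proposition~\ref{lem:feasibility}: because $\Esub{\Ri}{\mathrm{Unif}(\Bi)} = 1$, we can subtract the affine term and write
\begin{align}
D_\ib = \Esub{\psi(\Ri)}{\mathrm{Unif}(\Bi)}, \qquad \psi(r) \triangleq r\log r - (r-1)\log e .
\end{align}
The function $\psi$ is convex with $\psi(1) = \psi'(1) = 0$. Hence $\psi \ge 0$ everywhere and $\psi$ is nondecreasing on $[1,\infty)$. For any fixed $\lambda \ge 1$ we then get $D_\ib \ge \E{\psi(\Ri)\indic{\Ri\ge\lambda}} \ge \psi(\lambda)\,\Li(\lambda)$. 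Taking the supremum over $\lambda \ge 1$ gives the per-bin lower bound. For $\lambda > \Lambda_\ib$ the term vanishes, and the function is upper semicontinuous on the compact range $[1,\Lambda_\ib]$ because $\Li$ is left-continuous and nonincreasing, so the max is attained. Multiplying by $p_\ib$ and summing completes the argument.

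The main obstacle, I expect, is bookkeeping rather than any deep step. The points to handle with care are: the identity $\gi\abs{\Bi} = p_\ib$ that converts the integral into a uniform expectation; dropping the region $\{\Ri<1\}$ in the upper bound, which relies on the sign of $\phi$ there; and the Tonelli swap, which needs the nonnegativity of $\log\lambda + \log e$ on $[1,\infty)$. The one genuinely non-mechanical idea is the lower bound's use of the mean-one constraint from Proposition~\ref{lem:feasibility} to replace $\phi$ by the nonnegative, monotone $\psi$. Without that replacement, truncating to $\{\Ri \ge \lambda\}$ would not be a valid lower bound.
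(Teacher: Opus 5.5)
Your proof is correct and follows the paper's argument essentially step for step. Both rewrite each bin's contribution via $\gi\abs{\Bi}=p_\ib$; for the upper bound both drop $\curly{\Ri<1}$ and apply the identity $r\log r=\int_1^r(\log\lambda+\log e)\,d\lambda$ with Tonelli; for the lower bound both use the mean-one constraint to replace $r\log r$ by the nonnegative convex function $r\log r-(r-1)\log e$, which is nondecreasing on $[1,\infty)$. Your upper-semicontinuity argument, showing that the maximum over $\lambda\ge 1$ is actually attained, is a small detail the paper leaves implicit.
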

See Appendix~\ref{app:kl_bounds} for the proof. Applying the peak-width feasibility bound in Proposition~\ref{lem:feasibility} leads to the looser worst-case upper bound $\relent{f_\Xv}{g}
    \le
    \sum\limits_{\ib}p_\ib
    \int_1^{\Lambda_\ib}\frac{1}{\lambda}
    \paren{\log \lambda+\log e}
    \,d\lambda$, that integrates over maximally wide peaks of relative height $\lambda$ until the maximum relative height $\Lambda_\ib$ over each bin.
We can provide more geometric intuition by assuming the density only has one peak in each bin.
\begin{remark}[The one peak per bin height-width upper bound]\label{rem:single_peak} Suppose the above-average region in
$\Bi$ is contained in a measurable set $\Pib \subset \Bi$ with width
$
    \abs{\Pib}
    =
    w_\ib.
$
Let $\Hi \geq \gi$ be the max peak height of the density, so $f_\Xv(\xv)\leq \Hi$ for all $\xv\in \Pib$,

Then, for every $\lambda \ge 1$, the superlevel set $\curly{\xv\in\Bi:\Ri(\xv)\ge\lambda}$ is contained in $\Pib$, so by the definition of $L_i$,
$
    \Li(\lambda)
    \le
    \frac{w_\ib}{\abs{\Bi}}.
$ Then, the upper bound in Theorem~\ref{thm:continuum_bounds_main} gives
\begin{align}
    &\relent{f_\Xv}{g} = \sum\limits_{\ib}\int_{\Bi}f_\Xv(\xv)\log{\frac{f_\Xv}{\gi}}\,d\xv\\
    &\le
    \sum\limits_{\ib}p_\ib \frac{w_\ib}{\abs{\Bi}}
    \int_1^{\Hi/\gi}
    \paren{\log \lambda+\log e}
    \,d\lambda
    \\
    &=
    \sum\limits_{\ib}p_\ib \frac{w_\ib}{\abs{\Bi}}
    \frac{\Hi}{\gi}\log \frac{\Hi}{\gi} =  \sum\limits_{\ib}w_\ib
    \Hi\log \frac{\Hi\abs{\Bi}}{p_\ib}.
\end{align}

This is the one peak per bin height-width bound. The ratio $\frac{\Hi\abs{\Bi}}{p_\ib} \geq 1$ is the area of the rectangle with the height of the peak and the width of the entire bin, divided by the probability mass in that bin. 
For smooth densities and small bins, $\frac{\Hi\abs{\Bi}}{p_\ib} \approx 1$. 
The KL divergence bound increases only logarithmically with this ratio, while for $w_\ib\Hi$, the maximum area under the peak, the KL divergence bound increases linearly. In the setting of floating-point representations, this shows the absolute height $\Hi$ of a peak is not the only important quantity for controlling the KL divergence term. The KL divergence term can still be small for peaked distributions if the peaks have very small width, or if they are near the origin where the bins are very small.

\end{remark}

\section{Smoothing and Extending the Bin Size Function}\label{sec:smooth}

Our second approximation will alter the bin size function given in \eqref{eq:true_bins}. First, we can smooth the steps by introducing a best-fit linear approximation as shown by the dashed red line in Figure~\ref{fig:fp_diag}, which uses $e_{fp}(x) \approx e_{s}(x) \triangleq \log(|x|/\sqrt{2})$. This allows us to define a smoothed bin size function $\Delta_s: \R{} \rightarrow \R{}_+$ by
\begin{equation}
    \label{eq:step_approx}
    \Delta_s(x) \triangleq \frac{1}{\sqrt{2}}|x| \cdot 2^{1-p} \approx \Delta(x).
\end{equation}


Extending the domain of the bin size function from the granular region $\mathbb{U}$ to $\R{}$ allows for initially truncated integrals to then be given in concise closed form expressions. Again, we can generalize to $d$-dimensional vector quantization by defining $\bDeltas(\Xv) \triangleq \prod_{j = 1}^d\Delta_s(X_j)$. Building upon the first approximation $\tilde H(\Xv_{fp})$ given by \eqref{eq:approx1_clean_main}, we define the smoothed and extended approximation
\begin{equation}
    \boxed{
    \begin{aligned}
        \tilde H_s(\Xv_{fp}) &\triangleq h(\Xv) - \mathbb{E}[\log(\bDeltas(\Xv))] \label{eq:approx2_main}\\
        & = d\paren{p - \frac{1}{2}} + h(\Xv) - \sum\limits_{j =1}^d \E{\log|X_j|},
    \end{aligned}
    }
\end{equation}
where we assume the integrals defining the differential entropy and first log-moment of the folded (absolute value) continuous distribution are given by integrals over the whole real line.

For a $d$-dimensional quantization, we prove that the error caused by introducing this smooth approximation is bounded by $d/2$, while the extension error is given simply by the magnitude of the contribution of overflow and underflow values to the entropy integral. For many distributions, the smoothing error will be much smaller than $d/2$ due to cancellation in the overestimation and underestimation of the bin size.

\begin{theorem}[Error bound for smoothing and extending the bin-size function]\label{thm:steps}
Let $\Xv\sim f_\Xv$ be a $d$-dimensional random vector with the probability density $f_\Xv$. Let $\mathbb{O}^d \triangleq \mathbb{R}^d \setminus \mathbb{U}^d$ be the overflow region, and the underflow region be $\mathbb{S}^d \triangleq [-2^{e_{min}}, 2^{e_{min}}]^d$. Then
\begin{equation}
\abs{\Ht(\Xv_{fp}) - \Ht_s(\Xv_{fp})} \leq \frac{d}{2} + \epsilon,
\end{equation}
where 
\begin{equation}
    \epsilon = \abs{
    \int_{\mathbb{S}^d \cup \mathbb{O}^d}
    f_{\Xv}(\xv)
    \log\left[
    f_{\Xv}(\xv)\bDeltas(\xv)
    \right]
    d\xv}.
\end{equation}
\end{theorem}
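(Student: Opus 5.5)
The plan is to write both approximations as integrals against $f_\Xv$ and to split $\R{d}$ into two parts. The first part is the region where the true bin-size function $\bDelta$ is meaningful, namely the granular region with the underflow removed, $\mathbb{G}^d \triangleq \mathbb{U}^d\setminus\mathbb{S}^d$. The second part is its complement $\mathbb{S}^d\cup\mathbb{O}^d$. I read the first approximation $\Ht(\Xv_{fp})$ in \eqref{eq:approx1_clean_main} as having both the differential-entropy integral and the expectation of $\log\bDelta$ taken over $\mathbb{G}^d$, since $\bDelta$ is only defined there and there are no subnormals. The smoothed and extended $\Ht_s(\Xv_{fp})$ in \eqref{eq:approx2_main} integrates over all of $\R{d}$. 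Subtracting gives
\begin{align}
\Ht_s(\Xv_{fp})-\Ht(\Xv_{fp})
&= \int_{\mathbb{G}^d} f_\Xv(\xv)\log\frac{\bDelta(\xv)}{\bDeltas(\xv)}\,d\xv \notag\\
&\quad - \int_{\mathbb{S}^d\cup\mathbb{O}^d} f_\Xv(\xv)\log\bracket{f_\Xv(\xv)\bDeltas(\xv)}\,d\xv .
\end{align}
By the triangle inequality, the second term is bounded in absolute value by exactly $\epsilon$. So the whole proof reduces to showing that the first term has absolute value at most $d/2$.

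For the first term I use the product structure: $\log\bDelta(\xv)-\log\bDeltas(\xv)=\sum_{j=1}^d\bracket{\log\Delta(x_j)-\log\Delta_s(x_j)}$. For a coordinate with $2^{e_{fp}}\le |x_j|<2^{e_{fp}+1}$, \eqref{eq:true_bins} gives $\log\Delta(x_j)=e_{fp}-p+1$. From \eqref{eq:step_approx}, $\log\Delta_s(x_j)=\log|x_j|-\tfrac12+1-p$. Hence
\begin{equation}
\log\Delta(x_j)-\log\Delta_s(x_j)=e_{fp}(x_j)-\log|x_j|+\tfrac12\in\left(-\tfrac12,\tfrac12\right].
\end{equation}
This holds because $\log|x_j|-e_{fp}(x_j)\in[0,1)$, which is exactly why the offset $\sqrt2$ in $e_s$ was chosen. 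Summing over coordinates, the integrand is pointwise at most $d/2$ in absolute value. Integrating against $f_\Xv$ over $\mathbb{G}^d$, a set of probability at most one, gives the bound $d/2$.

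The main obstacle is bookkeeping rather than analysis, and it comes in two places. The first is the bins that straddle an exponent boundary or sit at the edge of the granular region, which Appendix~\ref{app:bin_bounds} handles. There, the effective $\Delta(x_j)$ is an average of $2^{e_{fp}-p+1}$ and $2^{e_{fp}-p}$, or a truncated half-bin. I would check that its logarithm still lies within $\tfrac12$ of $\log\Delta_s(x_j)$ on the relevant sub-interval. Where it does not, I would reassign those thin slivers to the set on which $\epsilon$ is computed, or absorb them by taking $\Delta$ pointwise equal to the value in \eqref{eq:true_bins}, as the identity \eqref{eq:exact_identity_main} permits. The second is that for $d>1$ the complement of $\mathbb{G}^d$ includes points where only some coordinates underflow or overflow, and these are not literally in $\mathbb{S}^d\cup\mathbb{O}^d$ as written. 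I would interpret $\mathbb{S}^d$ so that the decomposition is a genuine partition, or else verify that these mixed points are already counted in the $\epsilon$ integral. Either way, the final inequality follows from the two bounds above combined by the triangle inequality.
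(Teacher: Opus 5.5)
Your route is the paper's: split off $\mathbb{S}^d\cup\mathbb{O}^d$, bound $\abs{\log(\Delta_s/\Delta)}\le\tfrac12$ pointwise on the rest, and combine with the triangle inequality. Your interior computation is exactly the paper's, and reading $\Ht$ as an integral over $\mathbb{U}^d\setminus\mathbb{S}^d$ is what the second equality of the paper's final display tacitly needs.

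What is missing is the part the paper's appendix actually has to prove, namely the non-interior midpoint bins. Both of your fallbacks would prove a different statement. Moving slivers into the domain of $\epsilon$ changes $\epsilon$. Replacing midpoint widths by the step formula \eqref{eq:true_bins} is not something \eqref{eq:exact_identity_main} permits, because that identity is derived with $\bDelta(\xv)=\abs{\Bi}$ on each bin.

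The check is short and mostly succeeds. If $2^{e}$ with $e>e_{min}$ is the first value of a block, its bin is $[2^{e}(1-2^{-p-1}),\,2^{e}(1+2^{-p})]$ of width $3\cdot2^{e-p-1}$. There $\Delta_s/\Delta=\tfrac{2\sqrt2}{3}\abs{x_j}2^{-e}\in[\tfrac{2\sqrt2}{3}(1-2^{-p-1}),\tfrac{2\sqrt2}{3}(1+2^{-p})]\subseteq[\tfrac1{\sqrt2},\sqrt2]$. The outer bin is not a truncated half-bin, because $\mathbb{U}$ ends at $u_K+2^{e_{max}-p}$. For $p\ge2$ it is an ordinary bin inside $[2^{e_{max}},2^{e_{max}+1})$, already covered by your computation. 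For $p=1$ it is $[3\cdot2^{e_{max}-2},3\cdot2^{e_{max}-1}]$, where the ratio runs from $\tfrac1{\sqrt2}$ to $\sqrt2$.

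For $e=e_{min}$, however, the check genuinely fails, and neither your list nor the paper's three cases covers it. With no zero and no subnormals, the bin of $2^{e_{min}}$ is $[0,\,2^{e_{min}}(1+2^{-p})]$. So on the sliver $(2^{e_{min}},\,2^{e_{min}}(1+2^{-p})]\subset\mathbb{U}\setminus\mathbb{S}$ one has $\log(\Delta_s/\Delta)\le\tfrac12-p$. Taking $X$ uniform on this sliver gives $\epsilon=0$ but $\abs{\Ht(X_{fp})-\Ht_s(X_{fp})}\ge p-\tfrac12>\tfrac12$ for $p\ge2$. Under the midpoint reading, the statement itself therefore needs $\mathbb{S}$ enlarged to the full innermost bins. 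Your ``reassign slivers'' idea is the right repair, but it changes the theorem rather than proving it.

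Your $d>1$ concern is also real, but misplaced. Points with some but not all $\abs{x_j}\le2^{e_{min}}$ (and no overflow) lie inside $\mathbb{U}^d\setminus\mathbb{S}^d$, not in its complement. There the $j$-th summand $\log(\Delta(x_j)/\Delta_s(x_j))$ is unbounded as $x_j\to0$, so your pointwise $d/2$ bound fails on $\mathbb{G}^d$. Take $X_1\sim\unif{-\delta}{\delta}$ with $\delta<2^{e_{min}}$, independent of $X_2$ uniform on an interval inside $\mathbb{U}\setminus\mathbb{S}$. Then $\epsilon=0$ while the error grows like $\log(1/\delta)$, so ``verifying that these points are already counted in $\epsilon$'' cannot work. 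The paper avoids this only by silently rewriting the $d$-dimensional integral as $\sum_j\int_{\mathbb{U}\setminus\mathbb{S}}f_{X_j}\cdots\,dx_j$. That amounts to treating underflow coordinatewise, which is the redefinition of the exceptional set you suggest.
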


\begin{proof}[Abridged Proof]

The bound can be shown by considering the ratio between bin sizes. There are three types of bins for which to bound this ratio: bins on the interior of an exponent block, bins on the outer edges, and bins on the boundary between exponent blocks. Here, the bound is given for the interior bins; the other two cases proceed similarly and are given in Appendix~\ref{app:bin_bounds}. The bin size within an exponent block $e$ is defined by Eq.~\eqref{eq:true_bins},
\begin{equation}
    \label{eq:real_bins}
    \Delta(x_j) = 2^{e-(p-1)} \quad \text{for } 2^e \leq |x_j| < 2^{e+1},
\end{equation}
and within this range $\Delta_s(x_j) = \frac{|x_j|}{\sqrt{2}} \cdot 2^{1-p}$ satisfies
\begin{equation}
    \begin{aligned}
    &\frac{2^e}{\sqrt{2}}2^{1-p} \leq \Delta_s(x_j) \leq \frac{2^{e+1}}{\sqrt{2}}2^{1-p}\\
    \implies &\frac{1}{\sqrt{2}} \leq \frac{\Delta_s(x_j)}{\Delta(x_j)} < \sqrt{2}.\label{eq:ratio_bound}
    \end{aligned}
\end{equation}

From~\eqref{eq:approx1_clean_main},~\eqref{eq:approx2_main}, and the triangle inequality,
\begin{align}
\big|
\Ht(\Xv_{fp}) - &\Ht_{s}(\Xv_{fp}) \big| \leq \abs{\sum_{j=1}^{d} \int_{\mathbb{U}\setminus \mathbb{S}}f_{X_j}(x_j)\log\bracket{\frac{\Delta_s(x_j)}{\Delta(x_j)}}\,dx_j}\nonumber\\
&+\abs{\int_{\mathbb{S}^d \cup \mathbb{O}^d}
    f_{\Xv}(\xv)
    \log\left[
    f_{\Xv}(\xv)\bDeltas(\xv)
    \right]
    d\xv}.\label{eq:expect_intermediate}
\end{align}

Applying the bound in~\eqref{eq:ratio_bound} to the integral over $\mathbb{U}\setminus \mathbb{S}$ term in~\eqref{eq:expect_intermediate} completes the proof.
\end{proof}


This implies that the error of the smoothing and extending approximation is strictly bounded by $d/2$ plus the contribution of the overflow and underflow parts of the continuous distribution to be quantized.

\begin{figure}[htbp]
  \centering
  \begin{subfigure}{0.241\textwidth}
    \centering
    \includegraphics[width=\linewidth]{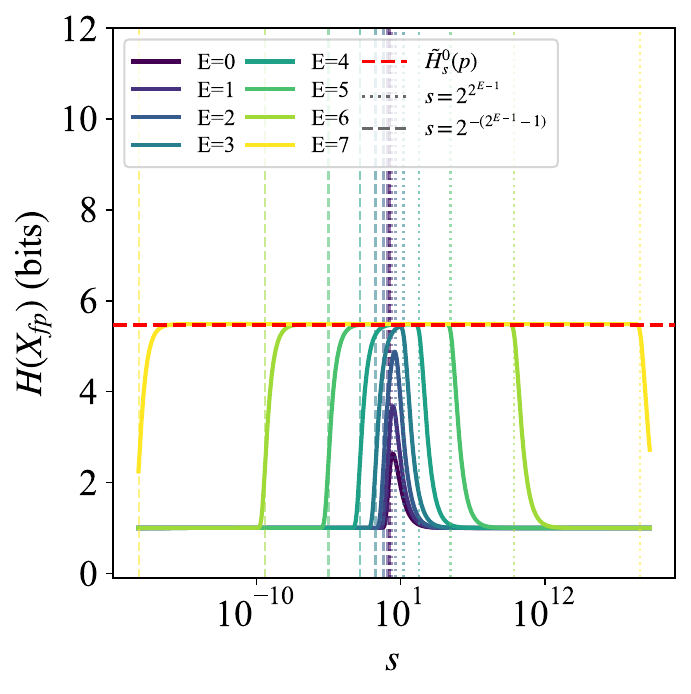}
    \caption{}
    \label{fig:fp_exact_approx_sigma}
  \end{subfigure}
  \begin{subfigure}{0.241\textwidth}
    \centering
    \includegraphics[width=\linewidth]{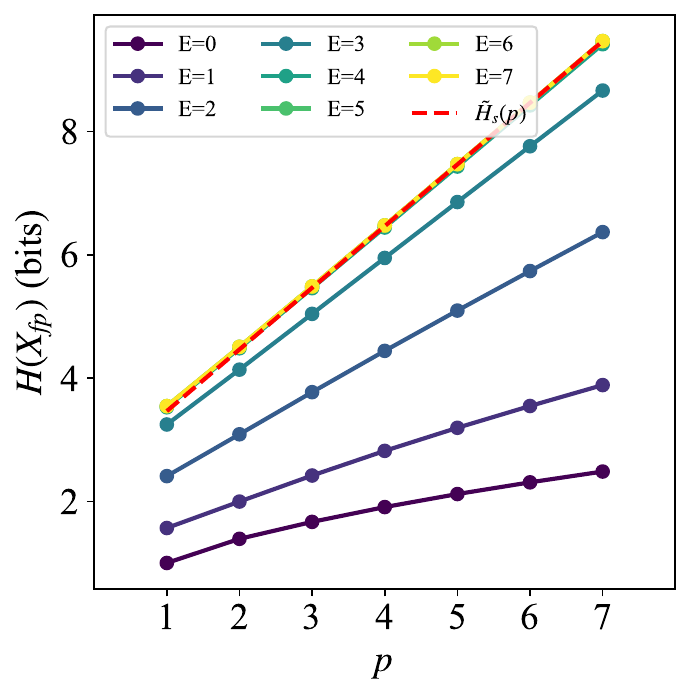}
    \caption{}
    \label{fig:fp_exact_approx_p_E}
  \end{subfigure}
\caption{\textit{Comparisons across exponent and mantissa bit lengths}. 
(\ref{fig:fp_exact_approx_sigma}) shows the entropy of $\Xfp$ which is the clipped and midpoint quantized representation of $X \sim \gauss{0}{s^2}$. Here precision $p=3$, and solid curves correspond to various numbers of exponent bits $E$. The horizontal red line shows the approximate entropy $\Ht_s(p)$. Vertical dashed and dotted lines mark $s = 2^{e_{\min}}$ and $s = 2^{e_{\max}}$, respectively for each $E$. Notice that the exact entropy closely follows the approximate entropy until $s$ approaches the overflow and underflow boundaries for each $E$. 
(\ref{fig:fp_exact_approx_p_E}) shows the exact entropy of $\Xfp$ as $p$ varies for each $E$ when the underlying random variable is $X\sim \gauss{0}{1}$. The approximation $\Ht_s(p)$ is also plotted as the dashed red line. Notice that the curves for $E \geq 4$ are directly on top of each other, even with low $p$, showing that $\Ht_s(p)$ is close to $H(\Xfp)$ whenever $E$ is large enough to keep the probability of overflows and underflows low.
}

  \label{fig:fp_exact_approx}
\end{figure}
\begin{table}[htbp]
    \centering
    \renewcommand{\arraystretch}{2}
    \begin{tabular}{@{}cccc@{}}
    \hline\hline
    & $f_X$ independent & $f_X$ dependent  & Error\\
    \hline\hline
    Float & $p-\frac{1}{2}$ & $-\E{\log\!\paren{|X|f_X(X)}}$ & $\frac{1}{2}+\epsilon+\relent{f_X}{g}$ \\
    \hline
    Uniform & $\log[1/\Delta]$ & $-\E{\log\!\paren{f_X(X)}} $ & $\relent{f_X}{g}$ \\
    \hline
    \end{tabular}
    \caption{\textit{Contributions to the discrete entropy for uniform and floating-point scalar quantization.} The exact entropy for each quantizer is the sum of its row. For both uniform and floating-point quantization, part of the approximation for the discrete entropy is independent of the distribution, and part is dependent. In the uniform case, the distribution dependent contribution is the differential entropy. In the floating-point case, this is $-\mathbb{E}[\log(|X|f_X(X))]$, which we call the \textit{differential floating-point entropy}.
    }
    \label{tab:uni_float_comparison_transposed}
\end{table}

\section{Scale-invariance}\label{sec:scale}
Here, we provide a simple proof that the approximate floating-point entropy does not change when the underlying continuous random variable is scaled by a constant. This predicts that the exact floating-point entropy is approximately scale-invariant as long as the probability of overflow and underflow is low. Figure~\ref{fig:fp_exact_approx_multi_dist} illustrates this scale-invariance until the scale parameter of the underlying distribution approaches the maximum or minimum exponent of the floating-point representation.

\begin{theorem}[Approximate floating-point entropy is scale-invariant for scalar random variables]\label{thm:scale_invariance}
 Let $X \sim f_X$, where $f_X$ is a density, and let $a \in \R{}\setminus \{0\}$. Let $\Xfp$ be the random variable resulting from floating-point quantization of $X$, and let $(aX)_{fp}$ be the random variable resulting from floating-point quantization of $aX$. Then the approximate floating-point entropy of $(aX)_{fp}$ has no dependence on $a$,
 \begin{align}
    \Ht_s((aX)_{fp}) = p - \frac{1}{2} + h(X) - \E{\log\abs{X}}.
 \end{align}
\end{theorem}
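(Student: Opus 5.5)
The plan is to start from the closed form of the smoothed and extended approximation in \eqref{eq:approx2_main} with $d=1$. Applied to the scaled variable $Y \triangleq aX$, it reads
\begin{equation}
\Ht_s((aX)_{fp}) = p - \frac{1}{2} + h(aX) - \E{\log\abs{aX}}.
\end{equation}
The distribution-independent term $p - \tfrac{1}{2}$ does not depend on $a$. So the whole argument reduces to showing that the $a$-dependence of the two distribution-dependent terms cancels. We are allowed to use the convention, stated after \eqref{eq:approx2_main}, that $h$ and the first log-moment are full-line integrals. This is what removes the granular region, and with it any dependence on $e_{min}$ or $e_{max}$.

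\textbf{Differential entropy term.} First, invoke the standard scaling property of differential entropy, $h(aX) = h(X) + \log\abs{a}$. I would verify it directly. Write $f_{aX}(y) = \frac{1}{\abs{a}} f_X(y/a)$, so that
\begin{equation}
h(aX) = -\E{\log f_{aX}(aX)} = -\E{\log f_X(X)} + \log\abs{a}.
\end{equation}
The change of variables $y = ax$ absorbs the Jacobian, and this holds for either sign of $a$.

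\textbf{Log-moment term.} Second, use $\log\abs{aX} = \log\abs{a} + \log\abs{X}$ pointwise. Taking expectations gives
\begin{equation}
\E{\log\abs{aX}} = \log\abs{a} + \E{\log\abs{X}}.
\end{equation}
Substituting both identities, the two $\log\abs{a}$ terms cancel, and we obtain $p - \tfrac{1}{2} + h(X) - \E{\log\abs{X}}$, as claimed.

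\textbf{Main obstacle.} The only real subtlety is the well-definedness of the quantities. We need $h(X)$ and $\E{\log\abs{X}}$ to be finite, or at least not both infinite with opposite signs, so that the subtraction $\log\abs{a} - \log\abs{a}$ and the linearity of expectation are legitimate. I would state this as an implicit standing assumption, namely that $\Ht_s(\Xfp)$ is finite. Under it, both expectations for $aX$ are finite if and only if they are for $X$, since they differ by the constant $\log\abs{a}$.

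Equivalently, one can observe that the combined quantity $-\E{\log(\abs{X} f_X(X))}$ from Table~\ref{tab:uni_float_comparison_transposed} is scale-free. Indeed, $\abs{ax}\, f_{aX}(ax) = \abs{x}\, f_X(x)$ pointwise, which gives a one-line version of the same argument.
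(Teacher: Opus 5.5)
Your proposal is correct and follows the same route as the paper: write $\Ht_s((aX)_{fp}) = p - \frac{1}{2} + h(aX) - \E{\log\abs{aX}}$, apply $h(aX) = h(X) + \log\abs{a}$, and cancel it against $\E{\log\abs{aX}} = \log\abs{a} + \E{\log\abs{X}}$. Your direct verification of the entropy scaling identity (which the paper cites from Cover and Thomas), your finiteness caveat, and the pointwise observation $\abs{ax}\, f_{aX}(ax) = \abs{x}\, f_X(x)$ are correct additions but do not change the argument.
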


\begin{proof}
    From \cite{CoverThomas}, we know that $h(aX) = h(X) + \log|a|$. This means,
   \begin{align}
    &\Ht_s((aX)_{fp}) = p - \frac{1}{2} + h(aX) - \E{\log\abs{aX}}\\
    &= p- \frac{1}{2} + h(X) + \log|a| - \log|a| - \E{\log\abs{X}}\\
    &=p - \frac{1}{2} + h(X) - \E{\log\abs{X}}.
 \end{align}  
\end{proof}

\begin{corollary}[Approximate floating-point entropy for random vectors is invariant to diagonal matrix transformations]\label{cor:vec_scale}
     Let $A = \mathrm{diag}\{a_1, \dots, a_d\}$ be a diagonal matrix with $a_j \neq 0$ for all $j \in \{1,\dots,d\}$, and let $\Xv \sim f_\Xv$ be a $d$-dimensional random vector. Then
 \begin{align}
    \Ht_s((A\Xv)_{fp}) = d\paren{p - \frac{1}{2}} + h(\Xv) - \sum_{j=1}^{d}\E{\log\abs{X_j}}.
 \end{align}
\end{corollary}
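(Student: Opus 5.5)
We follow the scalar argument of Theorem~\ref{thm:scale_invariance}, replacing the scalar scaling rule for differential entropy by its vector form. Start from the definition of the smoothed and extended approximation in \eqref{eq:approx2_main}, applied to the random vector $A\Xv$, whose components are $(A\Xv)_j = a_j X_j$:
\begin{align}
\Ht_s((A\Xv)_{fp}) = d\paren{p-\frac{1}{2}} + h(A\Xv) - \sum_{j=1}^d \E{\log\abs{a_j X_j}}.
\end{align}
This uses the componentwise structure $\bDeltas(\xv) = \prod_j \Delta_s(x_j)$, which holds because scalar quantization acts independently on each coordinate. Applying $A$ simply relabels which scalar goes into each coordinate's quantizer.

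Next, we invoke the change-of-variables formula for differential entropy under an invertible linear map \cite{CoverThomas}: $h(A\Xv) = h(\Xv) + \log\abs{\det A}$. Since $A$ is diagonal with nonzero entries, it is invertible and $\log\abs{\det A} = \sum_{j=1}^d \log\abs{a_j}$. For the log-moment term, linearity of expectation together with $\log\abs{a_j X_j} = \log\abs{a_j} + \log\abs{X_j}$ gives
\begin{align}
\sum_{j=1}^d \E{\log\abs{a_jX_j}} = \sum_{j=1}^d \log\abs{a_j} + \sum_{j=1}^d \E{\log\abs{X_j}}.
\end{align}
Substituting both expressions, the two sums $\sum_j \log\abs{a_j}$ cancel, and we are left with the claimed expression, which does not depend on $A$.

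The only step that needs care is a mild integrability caveat. We implicitly assume that $h(\Xv)$ and each $\E{\log\abs{X_j}}$ are finite, as the paper does when writing \eqref{eq:approx2_main}. Under that assumption no $\infty-\infty$ cancellation occurs, and finiteness is preserved under scaling. There is no real obstacle beyond this: the key point is that diagonality makes the Jacobian term decompose into exactly the per-coordinate log-scale terms subtracted by the log-moments.

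A general (non-diagonal) $A$ would not work. Then $\log\abs{\det A}$ would not match $\sum_j \E{\log\abs{(A\Xv)_j}} - \sum_j \E{\log\abs{X_j}}$, which is why the statement is restricted to diagonal matrices.
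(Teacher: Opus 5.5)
Your argument is correct and is essentially the paper's proof: both rest on $h(A\Xv)=h(\Xv)+\log\abs{\det A}$, with $\log\abs{\det A}=\sum_j\log\abs{a_j}$ cancelling the shifts $\E{\log\abs{a_jX_j}}=\log\abs{a_j}+\E{\log\abs{X_j}}$; the paper phrases this second step as applying Theorem~\ref{thm:scale_invariance} component by component. One small correction to your closing aside: diagonality is sufficient but not necessary, since permutation matrices (and more generally a permutation times a nonsingular diagonal matrix) are non-diagonal yet also leave $\Ht_s$ unchanged.
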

\begin{proof}
From \cite{CoverThomas}, $h(A\Xv) = h(\Xv) + \log\abs{\det A}$. The claim then follows by applying Theorem~\ref{thm:scale_invariance} component by component.
\end{proof}

\section{Discussion}\label{sec:disc}

For a sufficiently smooth distribution and sufficiently high range as set by $E$, the entropy of a floating-point quantized random variable can be approximated by
\begin{equation}
\boxed{
H(X_{fp}) \approx \tilde H_s(X_{fp})
\triangleq p-\frac{1}{2} - \mathbb{E}[\log(|X|f_X(X))],
}
\label{eq:approx_discussion}
\end{equation}
where the differential floating-point entropy, $- \mathbb{E}[\log(|X|f_X(X))]$, is the differential entropy $h(X)$ minus the first log-moment of its corresponding folded distribution $\mathbb{E}[\log |X|]$. The error on this approximation is bounded by 
\begin{equation}
\boxed{
|H(X_{fp}) - \tilde H_s(X_{fp})| \leq \frac{1}{2} + \epsilon +\relent{f_{X}}{g}
},
\label{eq:approx_error_discussion}
\end{equation}
where $\epsilon$ is the contribution of the underflow and overflow regions to the entropy, and $\relent{f_X}{g}$ is the Kullback-Leibler divergence between the true distribution and its piecewise constant approximation.

\bibliographystyle{IEEEtran}
\bibliography{mmc_lin_reg}

\begin{thebibliography}{10}
\providecommand{\url}[1]{#1}
\csname url@samestyle\endcsname
\providecommand{\newblock}{\relax}
\providecommand{\bibinfo}[2]{#2}
\providecommand{\BIBentrySTDinterwordspacing}{\spaceskip=0pt\relax}
\providecommand{\BIBentryALTinterwordstretchfactor}{4}
\providecommand{\BIBentryALTinterwordspacing}{\spaceskip=\fontdimen2\font plus
\BIBentryALTinterwordstretchfactor\fontdimen3\font minus \fontdimen4\font\relax}
\providecommand{\BIBforeignlanguage}[2]{{%
\expandafter\ifx\csname l@#1\endcsname\relax
\typeout{** WARNING: IEEEtran.bst: No hyphenation pattern has been}%
\typeout{** loaded for the language `#1'. Using the pattern for}%
\typeout{** the default language instead.}%
\else
\language=\csname l@#1\endcsname
\fi
#2}}
\providecommand{\BIBdecl}{\relax}
\BIBdecl

\bibitem{muller2010handbookfpn}
J.~M. Muller, \emph{\BIBforeignlanguage{eng}{Handbook of floating-point arithmetic / Jean-Michel Muller [and others].}}\hskip 1em plus 0.5em minus 0.4em\relax Boston: Birkhauser, 2010.

\bibitem{Goldberg1991_FPN_Intro}
\BIBentryALTinterwordspacing
D.~Goldberg, ``What every computer scientist should know about floating-point arithmetic,'' \emph{ACM Comput. Surv.}, vol.~23, no.~1, p. 5–48, Mar. 1991. [Online]. Available: \url{https://doi.org/10.1145/103162.103163}
\BIBentrySTDinterwordspacing

\bibitem{nvidia_nvidia_2025}
\BIBentryALTinterwordspacing
NVIDIA, ``\BIBforeignlanguage{en}{{NVIDIA} {Blackwell} {Architecture} {Technical} {Overview}},'' NVIDIA, Tech. Rep., 2025. [Online]. Available: \url{https://resources.nvidia.com/en-us-blackwell-architecture}
\BIBentrySTDinterwordspacing

\bibitem{advanced_micro_devices_inc_amd_2025}
\BIBentryALTinterwordspacing
I.~Advanced Micro~Devices, ``{AMD} {CDNA} 4 {Architecture},'' AMD, Tech. Rep., Oct. 2025. [Online]. Available: \url{https://www.amd.com/content/dam/amd/en/documents/instinct-tech-docs/white-papers/amd-cdna-4-architecture-whitepaper.pdf}
\BIBentrySTDinterwordspacing

\bibitem{micikevicius2022fp8formatsdeeplearning}
\BIBentryALTinterwordspacing
P.~Micikevicius, D.~Stosic, N.~Burgess, M.~Cornea, P.~Dubey, R.~Grisenthwaite, S.~Ha, A.~Heinecke, P.~Judd, J.~Kamalu, N.~Mellempudi, S.~Oberman, M.~Shoeybi, M.~Siu, and H.~Wu, ``Fp8 formats for deep learning,'' 2022. [Online]. Available: \url{https://arxiv.org/abs/2209.05433}
\BIBentrySTDinterwordspacing

\bibitem{kuzmin_fp8_2022}
\BIBentryALTinterwordspacing
A.~Kuzmin, M.~van Baalen, Y.~Ren, M.~Nagel, J.~Peters, and T.~Blankevoort, ``\BIBforeignlanguage{en}{{FP8} {Quantization}: {The} {Power} of the {Exponent}},'' \emph{\BIBforeignlanguage{en}{Advances in Neural Information Processing Systems}}, vol.~35, pp. 14\,651--14\,662, Dec. 2022. [Online]. Available: \url{https://proceedings.neurips.cc/paper_files/paper/2022/hash/5e07476b6bd2497e1fbd11b8f0b2de3c-Abstract-Conference.html}
\BIBentrySTDinterwordspacing

\bibitem{rouhani_microscaling_2023}
\BIBentryALTinterwordspacing
B.~D. Rouhani, R.~Zhao, A.~More, M.~Hall, A.~Khodamoradi, S.~Deng, D.~Choudhary, M.~Cornea, E.~Dellinger, K.~Denolf, S.~Dusan, V.~Elango, M.~Golub, A.~Heinecke, P.~James-Roxby, D.~Jani, G.~Kolhe, M.~Langhammer, A.~Li, L.~Melnick, M.~Mesmakhosroshahi, A.~Rodriguez, M.~Schulte, R.~Shafipour, L.~Shao, M.~Siu, P.~Dubey, P.~Micikevicius, M.~Naumov, C.~Verrilli, R.~Wittig, D.~Burger, and E.~Chung, ``Microscaling {Data} {Formats} for {Deep} {Learning},'' Oct. 2023, arXiv:2310.10537 [cs]. [Online]. Available: \url{http://arxiv.org/abs/2310.10537}
\BIBentrySTDinterwordspacing

\bibitem{googleBFloat16Secret}
``{B}{F}loat16: {T}he secret to high performance on {C}loud {T}{P}{U}s | {G}oogle {C}loud {B}log --- cloud.google.com,'' \url{https://cloud.google.com/blog/products/ai-machine-learning/bfloat16-the-secret-to-high-performance-on-cloud-tpus}, [Accessed 01-12-2025].

\bibitem{bordawekar2022efloatentropycodedfloatingpoint}
\BIBentryALTinterwordspacing
R.~Bordawekar, B.~Abali, and M.-H. Chen, ``Efloat: Entropy-coded floating point format for compressing vector embedding models,'' 2022. [Online]. Available: \url{https://arxiv.org/abs/2102.02705}
\BIBentrySTDinterwordspacing

\bibitem{hao_neuzip_2024}
\BIBentryALTinterwordspacing
Y.~Hao, Y.~Cao, and L.~Mou, ``{NeuZip}: {Memory}-{Efficient} {Training} and {Inference} with {Dynamic} {Compression} of {Neural} {Networks},'' Oct. 2024, arXiv:2410.20650 [cs]. [Online]. Available: \url{http://arxiv.org/abs/2410.20650}
\BIBentrySTDinterwordspacing

\bibitem{shannon_communication_1949}
\BIBentryALTinterwordspacing
C.~Shannon, ``Communication in the {Presence} of {Noise},'' \emph{Proceedings of the IRE}, vol.~37, no.~1, pp. 10--21, Jan. 1949. [Online]. Available: \url{https://ieeexplore.ieee.org/abstract/document/1697831}
\BIBentrySTDinterwordspacing

\bibitem{WolpertStochThermo2024}
\BIBentryALTinterwordspacing
D.~H. Wolpert, J.~Korbel, C.~W. Lynn, F.~Tasnim, J.~A. Grochow, G.~Kardeş, J.~B. Aimone, V.~Balasubramanian, E.~D. Giuli, D.~Doty, N.~Freitas, M.~Marsili, T.~E. Ouldridge, A.~W. Richa, P.~Riechers, Édgar Roldán, B.~Rubenstein, Z.~Toroczkai, and J.~Paradiso, ``Is stochastic thermodynamics the key to understanding the energy costs of computation?'' \emph{Proceedings of the National Academy of Sciences}, vol. 121, no.~45, p. e2321112121, 2024. [Online]. Available: \url{https://www.pnas.org/doi/abs/10.1073/pnas.2321112121}
\BIBentrySTDinterwordspacing

\bibitem{Wolpert2019}
D.~H. Wolpert, ``The stochastic thermodynamics of computation,'' \emph{Journal of Physics A: Mathematical and Theoretical}, vol.~52, no.~19, p. 193001, 2019.

\bibitem{Landauer1961}
R.~Landauer, ``Irreversibility and heat generation in the computing process,'' \emph{IBM Journal of Research and Development}, vol.~5, no.~3, pp. 183--191, Jul 1961.

\bibitem{maroney2008physicalbasisgibbsvonneumann}
\BIBentryALTinterwordspacing
O.~J.~E. Maroney, ``The physical basis of the gibbs-von neumann entropy,'' 2008. [Online]. Available: \url{https://arxiv.org/abs/quant-ph/0701127}
\BIBentrySTDinterwordspacing

\bibitem{MaroneyGeneralizingLandauer}
\BIBentryALTinterwordspacing
------, ``Generalizing landauer's principle,'' \emph{Phys. Rev. E}, vol.~79, p. 031105, Mar 2009. [Online]. Available: \url{https://link.aps.org/doi/10.1103/PhysRevE.79.031105}
\BIBentrySTDinterwordspacing

\bibitem{dambrosiaThermodynamicCostsSimple2026}
\BIBentryALTinterwordspacing
S.~H. D'Ambrosia, S.~M. Daniels, M.~R. DeWeese, and A.~Sahai. The {{Thermodynamic Costs}} of {{Simple Linear Regression}}. [Online]. Available: \url{http://arxiv.org/abs/2605.19195}
\BIBentrySTDinterwordspacing

\bibitem{nairFeedbackControlData2007}
\BIBentryALTinterwordspacing
G.~N. Nair, F.~Fagnani, S.~Zampieri, and R.~J. Evans, ``Feedback {{Control Under Data Rate Constraints}}: {{An Overview}},'' vol.~95, no.~1, pp. 108--137. [Online]. Available: \url{https://ieeexplore.ieee.org/abstract/document/4118468}
\BIBentrySTDinterwordspacing

\bibitem{kostinaRatelimitedControlSystems2016}
\BIBentryALTinterwordspacing
V.~Kostina, Y.~Peres, M.~Z. Rácz, and G.~Ranade, ``Rate-limited control of systems with uncertain gain,'' in \emph{2016 54th {{Annual Allerton Conference}} on {{Communication}}, {{Control}}, and {{Computing}} ({{Allerton}})}, pp. 1189--1196. [Online]. Available: \url{https://ieeexplore.ieee.org/abstract/document/7852370}
\BIBentrySTDinterwordspacing

\bibitem{kostinaRateCostTradeoffsControl2019}
\BIBentryALTinterwordspacing
V.~Kostina and B.~Hassibi, ``Rate-{{Cost Tradeoffs}} in {{Control}},'' vol.~64, no.~11, pp. 4525--4540. [Online]. Available: \url{https://ieeexplore.ieee.org/document/8693967}
\BIBentrySTDinterwordspacing

\bibitem{hanQuantizedOnlineLQR2026}
\BIBentryALTinterwordspacing
B.~Han, V.~Kostina, and B.~Hassibi. Quantized {{Online LQR}}. [Online]. Available: \url{http://arxiv.org/abs/2604.11930}
\BIBentrySTDinterwordspacing

\bibitem{eliaStabilizationLinearSystems2001}
\BIBentryALTinterwordspacing
N.~Elia and S.~Mitter, ``Stabilization of linear systems with limited information,'' vol.~46, no.~9, pp. 1384--1400. [Online]. Available: \url{https://ieeexplore.ieee.org/abstract/document/948466}
\BIBentrySTDinterwordspacing

\bibitem{elia2002QuantizedStabilization}
N.~Elia and E.~Frazzoli, ``Quantized stabilization of two-input linear systems: A lower bound on the minimal quantization density,'' in \emph{Hybrid Systems: {{Computation}} and Control}, C.~J. Tomlin and M.~R. Greenstreet, Eds.\hskip 1em plus 0.5em minus 0.4em\relax Springer Berlin Heidelberg, pp. 179--193.

\bibitem{okanoMinimumDataRate2014}
\BIBentryALTinterwordspacing
K.~Okano and H.~Ishii. Minimum data rate for stabilization of linear systems with parametric uncertainties. [Online]. Available: \url{http://arxiv.org/abs/1405.5932}
\BIBentrySTDinterwordspacing

\bibitem{GrayNeuhoff1998Quantization}
R.~M. Gray and D.~L. Neuhoff, ``Quantization,'' \emph{IEEE Transactions on Information Theory}, vol.~44, no.~6, pp. 2325--2383, 1998.

\bibitem{gish_asymptotically_1968}
\BIBentryALTinterwordspacing
H.~Gish and J.~Pierce, ``Asymptotically efficient quantizing,'' \emph{IEEE Transactions on Information Theory}, vol.~14, no.~5, pp. 676--683, Sep. 1968. [Online]. Available: \url{https://ieeexplore.ieee.org/abstract/document/1054193}
\BIBentrySTDinterwordspacing

\bibitem{kostina_data_2017}
\BIBentryALTinterwordspacing
V.~Kostina, ``Data {Compression} {With} {Low} {Distortion} and {Finite} {Blocklength},'' \emph{IEEE Transactions on Information Theory}, vol.~63, no.~7, pp. 4268--4285, Jul. 2017. [Online]. Available: \url{https://ieeexplore.ieee.org/abstract/document/7867787}
\BIBentrySTDinterwordspacing

\bibitem{CoverThomas}
T.~M. Cover and J.~A. Thomas, \emph{Elements of Information Theory (Wiley Series in Telecommunications and Signal Processing)}.\hskip 1em plus 0.5em minus 0.4em\relax USA: Wiley-Interscience, 2006.

\end{thebibliography}

\newpage
\appendices
\onecolumn

\section{Bounds on the KL divergence term}\label{app:kl_bounds}



Now to prove Theorem~\ref{thm:continuum_bounds_main}. Its statement is copied here for the reader's convenience.

\begingroup
\addtocounter{theorem}{-1}
\renewcommand{\thetheorem}{\ref{thm:continuum_bounds_main}}
\begin{theorem}[Bounds on the KL divergence] Under the convention $0\log 0 = 0$, assume
\begin{align}
    \Lambda_\ib
    \triangleq
    \operatorname*{ess\,sup}_{\xv\in \Bi} \Ri(\xv)
    =
    \frac{1}{\gi}
    \operatorname*{ess\,sup}_{\xv\in \Bi} f_\Xv(\xv) < \infty.
\end{align}

Then,
\begin{align}
    &\sum\limits_{\ib}p_\ib
\max_{\lambda \geq 1}\bracket{
\lambda\log \lambda
-
(\lambda-1)\log e
}
\Li(\lambda)
\leq \relent{f_\Xv}{g}
    \le
    \sum\limits_{\ib}p_\ib
    \int_1^{\Lambda_\ib}
    \paren{\log \lambda+\log e}
    \Li(\lambda)
    \,d\lambda.
\end{align}
\end{theorem}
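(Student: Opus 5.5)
Both bounds come from decomposing the KL divergence bin by bin and writing each bin's contribution as an integral over the relative density ratio $\Ri$ under the uniform law on $\Bi$. Since $f_\Xv = \gi \Ri$ on $\Bi$ and $p_\ib = \gi\abs{\Bi}$,
\begin{align}
\relent{f_\Xv}{g} = \sum_\ib \int_{\Bi} \gi \Ri(\xv)\log \Ri(\xv)\,d\xv = \sum_\ib p_\ib\, \Esub{\phi(\Ri(\Xv))}{\mathrm{Unif}(\Bi)},
\end{align}
where $\phi(r) = r\log r$. By Proposition~\ref{lem:feasibility}, $\Esub{\Ri - 1}{\mathrm{Unif}(\Bi)} = 0$. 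We may therefore replace $\phi$ by the nonnegative convex function $\psi(r) = r\log r - (r-1)\log e$ without changing the per-bin value. Note that $\psi(1)=\psi'(1)=0$ and $\psi'(r)=\log r$.

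\emph{Upper bound.} Drop the subtracted term in the other direction. Since $\phi(r)\le 0$ for $r\le 1$, we have
\begin{align}
\Esub{\phi(\Ri)}{\mathrm{Unif}(\Bi)} \le \Esub{\phi(\Ri)\indic{\Ri\ge 1}}{\mathrm{Unif}(\Bi)}.
\end{align}
On $r\ge 1$ write $\phi(r) = \int_1^r \phi'(\lambda)\,d\lambda$ with $\phi'(\lambda) = \log\lambda + \log e \ge 0$. Apply the layer-cake representation (Tonelli, since the integrand is nonnegative):
\begin{align}
\Esub{\phi(\Ri)\indic{\Ri\ge1}}{} = \int_1^{\infty} (\log\lambda+\log e)\,\mathbb{P}_{\mathrm{Unif}(\Bi)}\{\Ri\ge\lambda\}\,d\lambda = \int_1^{\Lambda_\ib}(\log\lambda+\log e)\Li(\lambda)\,d\lambda.
\end{align}
The upper limit truncates to $\Lambda_\ib$ because $\Li(\lambda)=0$ for $\lambda>\Lambda_\ib$; the case $\lambda=\Lambda_\ib$ is a null set. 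The assumption $\Lambda_\ib<\infty$ guarantees finiteness. Summing over $\ib$ with weights $p_\ib$ gives the upper bound.

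\emph{Lower bound.} Use $\psi\ge 0$ and its monotonicity on $[1,\infty)$, which follows from $\psi'(r)=\log r\ge0$ there. Fix any $\lambda\ge1$. Then
\begin{align}
\Esub{\psi(\Ri)}{} \ge \Esub{\psi(\Ri)\indic{\Ri\ge\lambda}}{} \ge \psi(\lambda)\,\Li(\lambda) = \bracket{\lambda\log\lambda-(\lambda-1)\log e}\Li(\lambda).
\end{align}
Since this holds for every $\lambda\ge1$, we may take the supremum per bin. The supremum is attained, which justifies writing $\max$. To see this, note that $\psi(\lambda)\Li(\lambda)$ vanishes at $\lambda=1$ and for $\lambda>\Lambda_\ib$. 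Moreover, $\Li$ is nonincreasing and right-continuous only in the sense that $\{\Ri\ge\lambda\}$ decreases, so $\Li$ is left-continuous: $\{\Ri\ge\lambda\}=\bigcap_{\mu<\lambda}\{\Ri\ge\mu\}$. Hence the product is upper semicontinuous on the compact set $[1,\Lambda_\ib]$. Multiplying by $p_\ib$ and summing over $\ib$ gives the lower bound.

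The computations are routine. The main care point is the identity $\Esub{\phi(\Ri)}{}=\Esub{\psi(\Ri)}{}$: it relies on Proposition~\ref{lem:feasibility}, and $\psi\ge0$ is what makes the lower bound possible. Without this substitution, the negative region $\{\Ri<1\}$ would spoil any pointwise lower bound. A second care point is the $0\log0$ convention on the sets where $f_\Xv=0$, together with the attainment of the max. If the attainment argument is shaky, I would state the lower bound with $\sup$, which suffices for everything used later.
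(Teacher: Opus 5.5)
Your proof is correct and follows the paper's argument essentially step for step. The upper bound comes from the per-bin decomposition with $f_\Xv=\gi\Ri$, discarding $\{\Ri<1\}$, and applying Tonelli to $r\log r=\int_1^r(\log\lambda+\log e)\,d\lambda$; the lower bound uses your $\psi(r)=r\log r-(r-1)\log e$ (the paper's $\phi$), its nonnegativity and monotonicity on $[1,\infty)$, and $\Esub{\Ri(\Xv)}{\mathrm{Unif}(\Bi)}=1$ from Proposition~\ref{lem:feasibility}. Your upper-semicontinuity argument that the maximum over $\lambda\ge 1$ is attained is a small addition the paper omits; it works because the nonincreasing $\Li$ is left-continuous, and the muddled phrase ``right-continuous only in the sense that'' should simply read ``left-continuous, since''.
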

\endgroup

\begin{proof}
The KL divergence can be written as
\begin{align}
    &\relent{f_\Xv}{g} 
    = \int
    f_\Xv(\xv)
    \log\frac{f_\Xv(\xv)}{g(\xv)}
    \,d\xv
    = \sum\limits_{\ib}\int_{\Bi}
    f_\Xv(\xv)
    \log\frac{f_\Xv(\xv)}{\gi}
    \,d\xv.
\end{align}
To prove the upper bound, since for all $\ib$, when $\xv \in \Bi$,
\begin{align}
    f_\Xv(\xv)
    \log\frac{f_\Xv(\xv)}{\gi}
    \leq 0
    \quad
    \text{whenever }
    f_\Xv(\xv)\leq \gi,
\end{align}
we have
\begin{align}
    &\int_{\Bi}
    f_\Xv(\xv)
    \log\frac{f_\Xv(\xv)}{\gi}
    \,d\xv
    \le
    \int_{\curly{\Ri\ge 1}}
    f_\Xv(\xv)
    \log \Ri(\xv)
    \,d\xv
    =
    \gi
    \int_{\Bi}
    \Ri(\xv)\log \Ri(\xv)
    \indic{\Ri(\xv)\ge 1}
    \,d\xv.
\end{align}
For $r\ge 1$,
\begin{align}
    r\log r
    =
    \int_1^r
    \log \lambda
    \,d\lambda + (r-1)\log e.
\end{align}
Thus, for every $\xv\in \Bi$,
\begin{align}
    &\Ri(\xv)\log \Ri(\xv)
    \indic{\Ri(\xv)\ge 1}
    =\int_1^{\Ri(\xv)}
    \paren{\log \lambda+\log e}
    \,d\lambda\indic{\Ri(\xv)\ge 1}
    =
    \int_1^{\Lambda_\ib}\indic{\lambda \leq \Ri(\xv)}
    \paren{\log \lambda+\log e}
    \,d\lambda.
\end{align}
Since for $\lambda \geq 1$, $\log\lambda \geq 0$, Tonelli's theorem can be applied to obtain the bound,
\begin{align}
    &\int_{\Bi}
    f_\Xv(\xv)
    \log\frac{f_\Xv(\xv)}{\gi}
    \,d\xv
    \le
    \gi
    \int_{\Bi}
    \int_1^{\Lambda_\ib}
    \paren{\log \lambda+\log e}
    \indic{\Ri(\xv)\ge \lambda}
    \,d\lambda
    \,d\xv
    \\
    &
    =
    \gi
    \int_1^{\Lambda_\ib}
    \paren{\log \lambda+\log e}
    \int_{\Bi}
    \indic{\Ri(\xv)\ge \lambda}
    \,d\xv
    \,d\lambda
    =
    \gi
    \int_1^{\Lambda_\ib}
    \paren{\log \lambda+\log e}
    \abs{
        \curly{
            \xv\in \Bi:
            \Ri(\xv)\ge \lambda
        }
    }
    \,d\lambda
    \\
    &
    =
    \gi \abs{\Bi}
    \int_1^{\Lambda_\ib}
    \paren{\log \lambda+\log e}
    \Li(\lambda)
    \,d\lambda
    =
    p_\ib
    \int_1^{\Lambda_\ib}
    \paren{\log \lambda+\log e}
    \Li(\lambda)
    \,d\lambda.
\end{align}
Summing over $\ib$ yields the upper bound on $\relent{f_\Xv}{g}$.

Now to prove the lower bound, let $\phi(r) \triangleq r\log r - (r - 1)\log e$. Taking its first and second derivatives,
\begin{align}
\phi'(r)
&=
\log r, \qquad \phi''(r) =
\frac{\log e}{r},
\end{align}
so $\phi$ is decreasing on $\paren{0,1}$, increasing on
$\paren{1,\infty}$, and $
\phi(1) = 0
$.
Thus
$
\phi(r) \ge 0$
for all $r\ge 0$.
Moreover, for
$\lambda\ge 1$, monotonicity of $\phi$ on $[1,\infty)$ gives
$
\phi(r)
\ge
\phi(\lambda)$ whenever $r\ge \lambda$.
Therefore,
\begin{align}
\phi\paren{\Ri(\xv)}
&\ge
\phi(\lambda)
\indic{\Ri(\xv)\ge \lambda}.
\end{align}
By the monotonicity of expectation, we have
\begin{align}
&p_\ib
\Esub{
\phi\paren{\Ri(\Xv)}
}{\mathrm{Unif}(\Bi)}
\ge
p_\ib
\phi(\lambda)
\Esub{
\indic{\Ri(\Xv)\ge \lambda}
}{\mathrm{Unif}(\Bi)}
=
p_\ib
\phi(\lambda)
\frac{1}{\abs{\Bi}}
\abs{\curly{\xv\in\Bi:\Ri(\xv)\ge \lambda}}                 
=
p_\ib
\phi(\lambda)
\Li(\lambda).\label{eq:mon_expec}
\end{align}
We know that the per-bin KL divergence can be written as \begin{align}
    \int_{\Bi}
    f_\Xv(\xv)
    \log\frac{f_\Xv(\xv)}{\gi}
    \,d\xv =
\gi \int_{\Bi} \Ri(\xv)\log\Ri(\xv)\,d\xv = p_\ib
\Esub{\Ri(\Xv)\log\Ri(\Xv)
}{\mathrm{Unif}(\Bi)}.
\end{align}
Finally, substituting the definition of $\phi$ into Eq.~\eqref{eq:mon_expec} and using Proposition~\ref{lem:feasibility} (which gives $\Esub{\Ri(\Xv)}{\mathrm{Unif}(\Bi)} = 1$) yields
\begin{align}
&p_\ib
\Esub{\Ri(\Xv)\log\Ri(\Xv) - \paren{\Ri(\Xv) - 1}\log e 
}{\mathrm{Unif}(\Bi)}
= p_\ib
\Esub{\Ri(\Xv)\log\Ri(\Xv)
}{\mathrm{Unif}(\Bi)}
\ge
p_\ib
\bracket{
\lambda\log\lambda
-
(\lambda-1)\log e
}
\Li(\lambda).
\end{align}
This holds for every $\lambda \ge 1$, so \begin{align}
    \relent{f_\Xv}{g} \geq \sum_{\ib}p_\ib \max_{\lambda \geq 1}
\bracket{
\lambda\log\lambda
-
(\lambda-1)\log e
}
\Li(\lambda).
\end{align}
\end{proof}
\section{The Entropy for Commonly Distributed Random Variables}\label{app:approx_dists}
We will proceed to closed-form formulas for the discrete entropy of commonly distributed random variables represented in floating-point, applying Eq.~\eqref{eq:approx_discussion}. Throughout, let $p$ be the precision of the floating-point number, the differential entropy be
\begin{equation}
    h(X) \triangleq - \int f_X(x) \log [f_X(x)] dx,
\end{equation}
and the approximate bin size be
\begin{equation}
    \Delta_s(x) \triangleq \frac{1}{\sqrt{2}} |x| \cdot 2^{1-p}.
\end{equation}

\subsection{Centered Univariate Gaussian Distribution}
Let $X \sim \mathcal{N}(0, \sigma_x^2)$. The approximate entropy of the floating-point quantized discrete random variable $X_{fp}$ is given by
    \begin{equation}
    \begin{aligned}
    \label{eq:1d-0m-fp}
        \Ht_{s}(\Xfp) &\triangleq h(X) - \Esub{\log\paren{\Delta_s(X)}}{f_X} \\
        & = \frac{1}{2}\log[2\pi e \sigma^2_x] - \int^{\infty}_{-\infty} \frac{\exp(-\frac{x^2}{2\sigma_x^2})}{\sigma_x \sqrt{2\pi}} \log [\Delta_s(x)] \, dx  \\
        & = \frac{1}{2}\log[2\pi e \sigma^2_x] - \int^{\infty}_{-\infty} \frac{\exp(-\frac{x^2}{2\sigma_x^2})}{\sigma_x \sqrt{2\pi}} (\log[|x|/\sqrt{2}] - (p - 1)) \, dx \\
        & = \frac{1}{2}\log[2\pi e \sigma^2_x] + (p - 1) - 2\int^{\infty}_0 \frac{\exp(-\frac{x^2}{2\sigma_x^2})}{\sigma_x \sqrt{2\pi}} \log[x/\sqrt{2}] \, dx\\
        & = \frac{1}{2}\log[2\pi e \sigma^2_x] + (p-1) + \bigg(1 + \frac{\gamma_e}{2\ln[2]} -\frac{1}{2} \log[\sigma^2_x] \bigg)\\
        & = (p-1) + \frac{1}{2} \log[2 \pi e] + 1 + \frac{\gamma_e}{2\ln[2]}\\
        & = p + \frac{1}{2} \log[2 \pi e] + \frac{\gamma_e}{2\ln[2]},
    \end{aligned}
\end{equation}
where $\gamma_e$ is Euler's constant. 

\subsection{Uniform Distribution}
Let $X \sim \unif{a}{b}$ with $a, b \in \R{}$ and $a < b$. The approximate floating-point entropy is
\begin{equation}
    \begin{aligned}
        \tilde H_{s}(X_{fp}) &= h(X) - \Esub{\log\paren{\Delta_s(X)}}{f_X}\\
        &= \log(b - a) - \frac{1}{b-a}\int\limits_a^b \log\paren{\frac{|x|}{\sqrt{2}}} + (1-p)dx\\
        & = \log(b - a) +p - 1 + \log(e) \\
        &\quad +\log\sqrt{2} + \frac{a\log|a| - b\log |b|}{b-a} \\
        & = p - 1  +\log(\sqrt{2}e(b-a)) + \frac{a\log|a| - b\log |b|}{b-a}.
    \end{aligned}
\end{equation}
This implies that for a centered uniform distribution, where $a = -b$ and $b > 0$,
\begin{equation}
    \begin{aligned}
        \tilde H_{s}(X_{fp}) & = p - 1  +\log(2\sqrt{2}e b) + \frac{-b\log b - b\log b}{2b} \\
        & = p - 1  +\log(2\sqrt{2}e b) - \log b \\
        & = p  + \frac{1}{2} + \log(e).
    \end{aligned}
\end{equation}

\subsection{Gamma Distribution}
Let $X \sim \text{Gamma}(\alpha,\theta)$, where $\alpha$ is the shape parameter, and $\theta$ is the scale parameter. In this case, the approximate floating-point entropy is even simpler to evaluate since $X > 0 $, meaning the expectation over the bin size is just given by the first log moment. With $\psi(\cdot)$ as the digamma function, and $\Gamma(\cdot)$ being the gamma function, we have
\begin{equation}
    h(X) = \log(e)(\alpha + \psi(\alpha)(1-\alpha)) + \log(\theta) + \log(\Gamma (\alpha)),
\end{equation}
and
\begin{equation}
    -\Esub{\log\paren{\Delta_s(X)}}{f_X} = p-1 - \psi(\alpha) \log(e) - \log(\theta) + \log(\sqrt{2}).
\end{equation}
From this we see
\begin{equation}
    \begin{aligned}
        \tilde H_{s}(X_{fp}) & = p - \frac{1}{2} + \alpha \log(e)(1 - \psi(\alpha) ) + \log(\Gamma(\alpha)).
    \end{aligned}
\end{equation}

\subsection{Chi-squared Distribution}
Let $X \sim \chi^2(k)$. The approximate floating-point entropy can be determined by the expression for the Gamma distribution, as $\chi^2(k)$ is the Gamma distribution with $\alpha = k/2$, $\theta = 2$.
\begin{equation}
    \begin{aligned}
        \tilde H_{s}(X_{fp}) &= p - \frac{1}{2} + \frac{k}{2} \log(e) \bigg (1-\psi \bigg(\frac{k}{2} \bigg) \bigg) + \log \bigg (\Gamma \bigg (\frac{k}{2} \bigg ) \bigg),
    \end{aligned}
\end{equation}
where again $\psi(\cdot)$ is the digamma function, and $\Gamma(\cdot)$ is the gamma function.

\subsection{Centered Laplace Distribution}
Let $X \sim$ Laplace$(0,b)$ with scale parameter $b>0$. The approximate floating-point entropy is
\begin{equation}
    \begin{aligned}
        \tilde H_{s}(X_{fp}) &= h(X) - \Esub{\log\paren{\Delta_s(X)}}{f_X}\\
        &= h(X) - \Esub{\log\abs{X}}{f_X} + p - \frac{1}{2}.
    \end{aligned}
\end{equation}
The differential entropy for the Laplace distribution is
\begin{equation}
    h(X) = \log(2be),
\end{equation}
while the first log moment of $|X|$ is 
\begin{equation}
    \Esub{\log\abs{X}}{f_X} = \log(b) - \frac{\gamma_e}{\ln 2},
\end{equation}
meaning
\begin{equation}
    \begin{aligned}
        \tilde H_{s}(X_{fp}) &=p+\frac{1}{2}+ \frac{1+\gamma_e}{\ln2}.
    \end{aligned}
\end{equation}

\subsection{Centered Logistic Distribution}
Let $X \sim$ Logistic$(0,s)$ with scale parameter $s>0$. The approximate floating-point entropy is
\begin{equation}
    \begin{aligned}
        \tilde H_{s}(X_{fp}) &= h(X) - \Esub{\log\paren{\Delta_s(X)}}{f_X}\\
        &= h(X) - \Esub{\log\abs{X}}{f_X} + p - \frac{1}{2}.
    \end{aligned}
\end{equation}
The differential entropy for the Logistic distribution is
\begin{equation}
    h(X) = \log(s) + \frac{2}{\ln2},
\end{equation}
while the first log moment of $|X|$ is 
\begin{equation}
    \Esub{\log\abs{X}}{f_X} = \log(s) + \frac{\ln(\pi/2)- \gamma_e}{\ln2},
\end{equation}
meaning
\begin{equation}
    \begin{aligned}
        \tilde H_{s}(X_{fp}) &=p-\frac{1}{2}+ \frac{2+\gamma_e - \ln(\pi/2)}{\ln2}.
    \end{aligned}
\end{equation}

\subsection{Weibull Distribution}
Let $X \sim$ Weibull($\lambda,k$). The approximate floating-point entropy is
\begin{equation}
    \begin{aligned}
        \tilde H_{s}(X_{fp}) &= h(X) - \Esub{\log\paren{\Delta_s(X)}}{f_X}\\
        &= h(X) - \Esub{\log\abs{X}}{f_X} + p - \frac{1}{2}.
    \end{aligned}
\end{equation}
The differential entropy for the Weibull distribution is
\begin{equation}
    h(X) = \log \bigg( \frac{\lambda}{k} \bigg) + \frac{1 + \gamma_e(1-\frac{1}{k})}{\ln 2},
\end{equation}
while the first log moment of $|X|$ is 
\begin{equation}
    \Esub{\log\abs{X}}{f_X} = \log(\lambda) - \frac{\gamma_e}{k\ln2},
\end{equation}
meaning
\begin{equation}
    \begin{aligned}
        \tilde H_{s}(X_{fp}) &=p-\frac{1}{2}+ \frac{1+\gamma_e}{\ln2} - \log(k).
    \end{aligned}
\end{equation}

\subsection{Lognormal Distribution}
Let $X \sim$ Lognormal($\mu, \sigma^2$). Note that for the Lognormal distribution, $\mu$ is the scale parameter and $\sigma$ is a shape parameter. The approximate floating-point entropy is
\begin{equation}
    \begin{aligned}
        \tilde H_{s}(X_{fp}) &= h(X) - \Esub{\log\paren{\Delta_s(X)}}{f_X}\\
        &= h(X) - \Esub{\log\abs{X}}{f_X} + p - \frac{1}{2}.
    \end{aligned}
\end{equation}
The differential entropy for the Lognormal distribution is
\begin{equation}
    h(X) = \frac{\mu}{\ln2} + \log (\sigma \sqrt{2 \pi e}),
\end{equation}
while the first log moment of $|X|$ is 
\begin{equation}
    \Esub{\log\abs{X}}{f_X} = \frac{\mu}{\ln2},
\end{equation}
meaning
\begin{equation}
    \begin{aligned}
        \tilde H_{s}(X_{fp}) &=p-\frac{1}{2}+ \log(\sigma \sqrt{2 \pi e}).
    \end{aligned}
\end{equation}

\subsection{Pareto Distribution}
Let $X \sim$ Pareto($x_m,\alpha$). The approximate floating-point entropy is
\begin{equation}
    \begin{aligned}
        \tilde H_{s}(X_{fp}) &= h(X) - \Esub{\log\paren{\Delta_s(X)}}{f_X}\\
        &= h(X) - \Esub{\log\abs{X}}{f_X} + p - \frac{1}{2}.
    \end{aligned}
\end{equation}
The differential entropy for the Pareto distribution is
\begin{equation}
    h(X) = \log \bigg( \frac{x_m}{\alpha}\bigg) + \frac{1+\frac{1}{\alpha}}{\ln2},
\end{equation}
while the first log moment of $|X|$ is 
\begin{equation}
    \Esub{\log\abs{X}}{f_X} = \log(x_m) + \frac{1}{\alpha \ln2},
\end{equation}
meaning
\begin{equation}
    \begin{aligned}
        \tilde H_{s}(X_{fp}) &=p-\frac{1}{2}+ \log \bigg( \frac{e}{\alpha}\bigg ).
    \end{aligned}
\end{equation}

\subsection{Beta Distribution}

Let $X \sim \mathrm{Beta}(\alpha,\beta)$, with $\alpha,\beta>0$. Since the beta distribution has support on $(0,1)$, we have $|X|=X$. The approximate floating-point entropy is
\begin{equation}
    \begin{aligned}
        \tilde H_{s}(X_{fp}) 
        &= h(X) - \Esub{\log\paren{\Delta_s(X)}}{f_X}\\
        &= h(X) - \Esub{\log X}{f_X} + p - \frac{1}{2}.
    \end{aligned}
\end{equation}
The differential entropy for the beta distribution is
\begin{equation}
    h(X)
    =
    \frac{
    \ln B(\alpha,\beta)
    -(\alpha-1)\psi(\alpha)
    -(\beta-1)\psi(\beta)
    +(\alpha+\beta-2)\psi(\alpha+\beta)
    }{\ln 2},
\end{equation}
where $B(\alpha,\beta)$ is the beta function and $\psi$ is the digamma function. The first log moment is
\begin{equation}
    \Esub{\log X}{f_X}
    =
    \frac{\psi(\alpha)-\psi(\alpha+\beta)}{\ln 2}.
\end{equation}
Therefore,
\begin{equation}
    \begin{aligned}
        \tilde H_{s}(X_{fp})
        &=
        p-\frac{1}{2}
        +
        \frac{
        \ln B(\alpha,\beta)
        -(\alpha-1)\psi(\alpha)
        -(\beta-1)\psi(\beta)
        +(\alpha+\beta-2)\psi(\alpha+\beta)
        -\psi(\alpha)
        +\psi(\alpha+\beta)
        }{\ln 2} \\
        &=
        p-\frac{1}{2}
        +
        \log B(\alpha,\beta)+
        \frac{(\alpha+\beta-1)\psi(\alpha+\beta)
        -\alpha\psi(\alpha)
        -(\beta-1)\psi(\beta)
        }{\ln 2}.
    \end{aligned}
\end{equation}

\subsection{Student's $t$ Distribution}

Let $X \sim t_{\nu}(0,s)$ be a centered Student's $t$ distribution with degrees of freedom $\nu>0$ and scale parameter $s>0$. The approximate floating-point entropy is
\begin{equation}
    \begin{aligned}
        \tilde H_{s}(X_{fp}) 
        &= h(X) - \Esub{\log\paren{\Delta_s(X)}}{f_X}\\
        &= h(X) - \Esub{\log\abs{X}}{f_X} + p - \frac{1}{2}.
    \end{aligned}
\end{equation}
The differential entropy for the Student's $t$ distribution is
\begin{equation}
    h(X)
    =
    \frac{1}{\ln 2}
    \left[
    \ln\left(
    s\sqrt{\nu}\,
    B\left(\frac{\nu}{2},\frac{1}{2}\right)
    \right)
    +
    \frac{\nu+1}{2}
    \left(
    \psi\left(\frac{\nu+1}{2}\right)
    -
    \psi\left(\frac{\nu}{2}\right)
    \right)
    \right],
\end{equation}
where $B$ is the beta function and $\psi$ is the digamma function. The first absolute log moment is
\begin{equation}
    \Esub{\log\abs{X}}{f_X}
    =
    \frac{1}{\ln 2}
    \left[
    \ln s
    +
    \frac{1}{2}
    \left(
    \ln \nu
    +
    \psi\left(\frac{1}{2}\right)
    -
    \psi\left(\frac{\nu}{2}\right)
    \right)
    \right].
\end{equation}
Therefore,
\begin{equation}
    \begin{aligned}
        \tilde H_{s}(X_{fp})
        &=
        p-\frac{1}{2}
        +
        \frac{1}{\ln 2}
        \bigg[
        \ln\left(
        s\sqrt{\nu}\,
        B\left(\frac{\nu}{2},\frac{1}{2}\right)
        \right)
        +
        \frac{\nu+1}{2}
        \left(
        \psi\left(\frac{\nu+1}{2}\right)
        -
        \psi\left(\frac{\nu}{2}\right)
        \right) \\
        &\qquad\qquad
        -
        \ln s
        -
        \frac{1}{2}
        \left(
        \ln \nu
        +
        \psi\left(\frac{1}{2}\right)
        -
        \psi\left(\frac{\nu}{2}\right)
        \right)
        \bigg] \\
        &=
        p-\frac{1}{2}
        +\log B\left(\frac{\nu}{2},\frac{1}{2}\right)+
        \frac{\frac{\nu+1}{2}\psi\left(\frac{\nu+1}{2}\right)
        -
        \frac{\nu}{2}\psi\left(\frac{\nu}{2}\right)
        -
        \frac{1}{2}\psi\left(\frac{1}{2}\right)
        }{\ln 2}.
    \end{aligned}
\end{equation}

\subsection{Centered Multivariate Gaussian Distribution}
Let $(X_1, \dots, X_d) = \Xv \sim \gauss{0}{\Sigma}$ where $\Sigma \succ 0$. Let $|\Sigma|$ denote the determinant of $\Sigma$ and let $\lambda_i(\Sigma)$ denote the $i$-th eigenvalue of $\Sigma$.
Assuming each variable is independently quantized on its own floating-point number, the approximate joint floating-point entropy is
\begin{equation}
    \begin{aligned}
        \tilde H_{s}(\Xv_{fp}) &= h(\Xv) - \Esub{\log\paren{\prod\limits_{i=1}^{d}\Delta_s(X_i)}}{f_\Xv}\\
        &=\frac{1}{2}\log\paren{(2\pi e)^d|\Sigma|} - \sum\limits_{i=1}^{d}\Esub{\log\paren{\Delta_s(X_i)}}{f_\Xv}\\
        &=\frac{1}{2}\log\paren{(2\pi e)^d|\Sigma|} - \sum\limits_{i=1}^{d}\Esub{\log\paren{\Delta_s(X_i)}}{f_{X_i}}\\
        &= \frac{1}{2}\log\paren{(2\pi e)^d|\Sigma|} \\
        &\quad - \sum\limits_{i=1}^{d}\bracket{-(p-1) - \paren{1 + \frac{\gamma_e}{2\ln[2]} -\frac{1}{2} \log\paren{\Sigma_{ii}}}}\\
        &= \frac{1}{2}\log\paren{(2\pi e)^d|\Sigma|} + d\paren{p + \frac{\gamma_e}{2\ln[2]}} -\frac{1}{2} \sum\limits_{i=1}^{d}\log\paren{\Sigma_{ii}}\\
        &= d\paren{\frac{1}{2}\log\paren{2\pi e} + \paren{p + \frac{\gamma_e}{2\ln[2]}}} + \frac{1}{2} \log\paren{\frac{\prod\limits_{i=1}^{d}\lambda_i(\Sigma)}{\prod\limits_{i=1}^{d}\Sigma_{ii}}}.
    \end{aligned}
\end{equation}
Notice that the last term is zero when $\Sigma$ is diagonal.
\section{Bound on error from outer clipping bins and exponent boundary bins}
\label{app:bin_bounds}
Here we provide the full proof of Theorem~\ref{thm:steps}. Its statement is copied here for the reader's convenience.
\begingroup
\addtocounter{theorem}{-1}
\renewcommand{\thetheorem}{\ref{thm:steps}}
\begin{theorem}[Error bound for smoothing and extending the bin-size function]
Let $\Xv\sim f_\Xv$ be a $d$-dimensional random vector with the probability density $f_\Xv$. Let $\mathbb{O}^d \triangleq \mathbb{R}^d \setminus \mathbb{U}^d$ be the overflow region, and the underflow region be $\mathbb{S}^d \triangleq [-2^{e_{min}}, 2^{e_{min}}]^d$. Then
\begin{equation}
\abs{\Ht(\Xv_{fp}) - \Ht_s(\Xv_{fp})} \leq \frac{d}{2} + \epsilon,
\end{equation}
where 
\begin{equation}
    \epsilon = \abs{
    \int_{\mathbb{S}^d \cup \mathbb{O}^d}
    f_{\Xv}(\xv)
    \log\left[
    f_{\Xv}(\xv)\bDeltas(\xv)
    \right]
    d\xv}.
\end{equation}
\end{theorem}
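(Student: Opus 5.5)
We begin from the definitions \eqref{eq:approx1_clean_main} and \eqref{eq:approx2_main}. $\Ht(\Xv_{fp})$ integrates over the granular region $\mathbb{U}^d$, where $\bDelta$ is defined. $\Ht_s(\Xv_{fp})$ integrates over all of $\R{d}$. We write both as integrals of $-f_\Xv\log[f_\Xv\,\cdot\,\text{bin size}]$ and split $\R{d}$ into three parts: the region $(\mathbb{U}\setminus\mathbb{S})^d$ where every coordinate lies in a normal exponent block, the underflow region $\mathbb{S}^d$, and the overflow region $\mathbb{O}^d$.

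On the granular non-underflow part the $f_\Xv\log f_\Xv$ terms cancel exactly. Since $\log\bDelta$ and $\log\bDeltas$ are sums over coordinates, the difference there is
\[
\sum_{j=1}^d \int f_\Xv(\xv)\log\frac{\Delta_s(x_j)}{\Delta(x_j)}\,d\xv .
\]
Each summand is then bounded using a pointwise ratio bound $2^{-1/2}\le \Delta_s(x_j)/\Delta(x_j)\le 2^{1/2}$. This makes each $|\log(\Delta_s/\Delta)|\le 1/2$, and integrating against a probability density gives at most $d/2$. The remaining pieces form the $\epsilon$ term and are controlled by the triangle inequality, as in \eqref{eq:expect_intermediate}. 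In the overflow region only $\Ht_s$ contributes. In the underflow region, which is not covered by normal exponent blocks in this simplified format, the smoothed quantity is the one retained. In both regions the contribution is exactly $\int f_\Xv\log[f_\Xv\bDeltas]$, whose absolute value is $\epsilon$. Some care is needed because $\mathbb{S}^d\cup\mathbb{O}^d$ together with $(\mathbb{U}\setminus\mathbb{S})^d$ does not partition $\R{d}$ when $d>1$, since some coordinates can underflow while others do not. I would handle this either by treating the mixed-coordinate region inside the per-coordinate sum, where coordinates in $\mathbb{S}$ are handled by the same extension convention, or by absorbing it into the extension term. I would state the convention used to make the split consistent with \eqref{eq:expect_intermediate}.

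The main work is establishing the pointwise ratio bound for every bin type, not just interior bins.

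\textbf{Interior bins.} Here \eqref{eq:ratio_bound} already gives $[2^{-1/2},2^{1/2})$. This follows from $2^e\le|x|<2^{e+1}$ and $\Delta=2^{e-(p-1)}$.

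\textbf{Bins straddling an exponent boundary $2^{e+1}$.} With midpoint rounding, the representable value $2^{e+1}$ has a bin $[2^{e+1}-2^{e-p},\,2^{e+1}+2^{e+1-p}]$ of width $\tfrac32\,2^{e+1-p}$. More importantly, points just below $2^{e+1}$ that round up lie in a bin larger than the block's nominal $2^{e-(p-1)}$. For the true bin size $\Delta(x)=|B_\ib|$ there, I would check that $\Delta_s(x)$ evaluated at $|x|\approx 2^{e+1}$ gives a ratio $\sqrt2\cdot\tfrac{2}{3}\in[2^{-1/2},2^{1/2}]$. More generally, I would verify the ratio over the whole bin, using $|x|\in[2^{e+1}(1-2^{-p-1}),\,2^{e+1}(1+2^{-p})]$. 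The ratio should stay within $[2^{-1/2},2^{1/2}]$ because the endpoints move by only $O(2^{-p})$ relative amounts.

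\textbf{Outermost clipping bins.} These have half-widths determined by $a_1$ and $b_K$, namely $2^{e_{max}-p}$ on the outer side. I would check their widths in the same way.

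I expect the exponent-boundary case to be the main obstacle. The endpoints $2^{\pm 1/2}$ are tight in the interior, so the boundary bins could push the ratio slightly past $\sqrt2$ near $|x|=2^{e+1}(1+2^{-p})$. If the direct check fails, my fallback is to assign each point of a boundary bin to the nearer block's nominal bin size. This is legitimate because the exact identity \eqref{eq:exact_identity_main} only uses $\abs{\Bi}$ bin by bin. A boundary bin can therefore be split into its two halves, each compared to $\Delta_s$ with the interior-style bound.
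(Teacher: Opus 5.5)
Your decomposition and three-way case analysis (interior, exponent-boundary, outer clipping bins) are the same as the paper's. The case you flagged as the obstacle closes by direct computation. On the boundary bin $[2^{e+1}(1-2^{-p-1}),\,2^{e+1}(1+2^{-p})]$, of width $3\cdot 2^{e-p}$, one gets $\tfrac{2\sqrt2}{3}(1-2^{-p-1})\le \Delta_s/\Delta\le \tfrac{2\sqrt2}{3}(1+2^{-p})$. This lies in $[2^{-1/2},2^{1/2}]$, with both endpoints attained at $p=1$. So an ``$O(2^{-p})$'' perturbation argument is not enough for small $p$, but the explicit check suffices and your fallback is never needed. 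The fallback would not be legitimate anyway: $\Ht$ is defined with $\bDelta(\xv)=\abs{\Bi}$ on the whole bin, so splitting a bin and giving its halves different sizes changes the quantity you are bounding.

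You do miss a fourth bin type. With no zero and no subnormals, the bin of $2^{e_{min}}$ is $[0,\,2^{e_{min}}(1+2^{-p})]$. On its part $(2^{e_{min}},\,2^{e_{min}}(1+2^{-p})]$, which lies outside $\mathbb{S}$, the ratio $\Delta_s/\Delta$ is about $2^{1/2-p}$, so $\abs{\log(\Delta_s/\Delta)}\le\frac12$ fails there. The paper's proof has the same omission.

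The genuine gap is the underflow bookkeeping. You assert that on $\mathbb{S}^d$ only the smoothed integrand survives, so the leftover there is $\int f_\Xv\log[f_\Xv\bDeltas]$. But $\mathbb{S}^d\subset\mathbb{U}^d$, and $\Ht=-\int_{\mathbb{U}^d}f_\Xv\log[f_\Xv\bDelta]\,d\xv$ integrates over it. The actual contribution of $\mathbb{S}^d$ to $\Ht-\Ht_s$ is therefore $\int_{\mathbb{S}^d}f_\Xv\log[\bDeltas/\bDelta]\,d\xv$, which $\epsilon$ does not control. Reading $\Ht$ as excluding $\mathbb{S}^d$ would rescue the bookkeeping, but it would break the identity \eqref{eq:exact_identity_main} that motivates $\Ht$.

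The paper's final display, like \eqref{eq:expect_intermediate}, makes the same identification, so following it cannot close the gap. In fact the inequality as stated fails. Take $d=1$ and $X\sim\gauss{0}{\sigma^2}$ with $\sigma\to0$. Then $\Ht_s$ does not depend on $\sigma$ (Theorem~\ref{thm:scale_invariance}) and $\epsilon\to\Ht_s$. Meanwhile $\E{\log\Delta(X)}\to e_{min}+\log(1+2^{-p})$ and $h(X)\to-\infty$. Hence $\Ht\to-\infty$, so $\abs{\Ht-\Ht_s}\to\infty$ while $\frac12+\epsilon$ stays bounded.

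What your route does prove comes from writing $\log[\bDeltas/\bDelta]=\sum_j\log[\Delta_s(x_j)/\Delta(x_j)]$ on $\mathbb{U}^d$. Then split each coordinate according to whether $x_j\in\mathbb{S}'\triangleq[-2^{e_{min}}(1+2^{-p}),\,2^{e_{min}}(1+2^{-p})]$. This gives $\abs{\Ht-\Ht_s}\le\frac d2+\epsilon'$ with
\[
\epsilon'=\Bigl|\sum_{j=1}^d\int_{\mathbb{U}^d\cap\{x_j\in\mathbb{S}'\}}f_\Xv(\xv)\log\frac{\Delta_s(x_j)}{\Delta(x_j)}\,d\xv+\int_{\mathbb{O}^d}f_\Xv(\xv)\log\bigl[f_\Xv(\xv)\bDeltas(\xv)\bigr]\,d\xv\Bigr|.
\]
This coordinatewise split also removes your $d>1$ partition worry, since points where only some coordinates underflow need no special convention.
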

\endgroup

\begin{proof}
There are three types of bins for which we will bound the error: bins on the interior of an exponent block, bins on the outer edges, and bins on the boundary between exponent blocks (regions for which a single exponent value applies). The bound can be shown by considering the ratio between bin sizes, case by case.

\paragraph{Interior bins}
For interior bins strictly within each exponent block $e$, by Eq.~\eqref{eq:true_bins},
\begin{equation}
    \label{eq:real_bins_appendix}
    \Delta(x_j) = 2^{e-(p-1)} \quad \text{for } 2^e \leq |x_j| < 2^{e+1},
\end{equation}
and within this range $\Delta_s(x_j) = \frac{|x_j|}{\sqrt{2}} \cdot 2^{1-p}$ satisfies
\begin{equation}
    \begin{aligned}
    &\frac{2^e}{\sqrt{2}}2^{1-p} \leq \Delta_s(x_j) \leq \frac{2^{e+1}}{\sqrt{2}}2^{1-p}\\
    \implies &\frac{1}{\sqrt{2}} \leq \frac{\Delta_s(x_j)}{\Delta(x_j)} < \sqrt{2}.
    \end{aligned}
\end{equation}

\paragraph{Outer clipping bins}

For the outer clipping bins, by symmetry it suffices to analyze $[\frac{u_K + u_{K-1}}{2}, 2^{e_{\max}+1}-2^{e_{\max}-p}]$ which is the bin on the positive side of the real line. When $p \geq 2$, we know that $u_K = 2^{e_{max}+1}(1 - 2^{-p})$, and $u_{K-1} = 2^{e_{max}+1}(1 - 2^{1-p})$, so the outer clipping bin size is \begin{equation}
\Delta(x_j) = 2^{e_{max} + 1} - 2^{e_{max} - p}- \frac{u_K + u_{K-1}}{2} = 2^{e_{max}}\paren{2^{1-p}} = 2^{e_{max}-p + 1}.
\end{equation}
Now, we have
\begin{equation}
\begin{aligned}
  &\frac{(u_K + u_{K-1})}{2\sqrt{2}}2^{1-p} \leq \Delta_s(x_j) \leq \frac{2^{e_{max} + 1} - 2^{e_{max} - p}}{\sqrt{2}}2^{1-p}\\
  \implies &\frac{ 2^{e_{max}}(1 - 2^{-1-p} - 2^{-p})}{2^{e_{max} - p}\sqrt{2}}2^{1-p} \leq \frac{\Delta_s(x_j)}{\Delta(x_j)} \leq \frac{2^{e_{max}}(1 - 2^{-1 - p})}{2^{e_{max} - p}\sqrt{2}}2^{1-p}\\
  \implies &\frac{ 2(1 - 2^{-1-p} - 2^{-p})}{\sqrt{2}} \leq \frac{\Delta_s(x_j)}{\Delta(x_j)} \leq \frac{2(1- 2^{-1-p})}{\sqrt{2}}\\
  \implies &\sqrt{2}(1 - 2^{-1-p} - 2^{-p}) \leq \frac{\Delta_s(x_j)}{\Delta(x_j)} \leq \sqrt{2}(1- 2^{-1-p}).
\end{aligned}
\end{equation}
When $p = 1$, $u_K = 2^{e_{max}}$ and $u_{K-1} = 2^{e_{max}-1}$. The true bin width is 
\begin{equation}
\Delta(x_j) = 2^{e_{max}+1} - 2^{e_{max}-1} - \frac{1}{2}(2^{e_{max}} + 2^{e_{max}-1}) = 3\cdot2^{e_{max}-2}.
\end{equation}
Using the same bounding technique, we have
\begin{equation}
\begin{aligned}
  &\frac{2^{e_{max}} + 2^{e_{max}-1}}{2\sqrt{2}} \leq \Delta_s(x_j) \leq \frac{2^{e_{max}+1} - 2^{e_{max}-1}}{\sqrt{2}}\\
  \implies &\frac{3\cdot 2^{e_{max}-2}}{3\cdot2^{e_{max}-2}\sqrt{2}} \leq \frac{\Delta_s(x_j)}{\Delta(x_j)} \leq \frac{3\cdot 2^{e_{max}-1}}{3\cdot2^{e_{max}-2}\sqrt{2}}\\
  \implies &\frac{ 1}{\sqrt{2}} \leq \frac{\Delta_s(x_j)}{\Delta(x_j)} \leq \frac{2}{\sqrt{2}}.
\end{aligned}
\end{equation}
For the outer clipping bins, we see that for all $p\geq 1$, $\frac{1}{\sqrt{2}} \leq \frac{\Delta_s(x_j)}{\Delta(x_j)} \leq \sqrt{2}$.

\paragraph{Exponent boundary bins}

At exponent boundaries, let the last representable value of the $e$-th exponent block be $u_A = 2^{e+1} - 2^{e-(p-1)}$, the first representable value of the $e+1$-th exponent block be $u_B = 2^{e+1}$, and the second representable value of the $e+1$-th exponent block be $u_C = 2^{e+1} + 2^{e+1 - (p-1)}$. The left boundary of the midpoint-quantization bin for $u_B$ is $m_L = \frac{u_A + u_B}{2} = 2^{e+1}(1 - 2^{-p-1})$ while the right boundary is $m_R = \frac{u_B + u_C}{2} = 2^{e+1}(1+2^{-p})$. When $x_j \in [m_L, m_R)$, the bin width is
\begin{equation}\label{eq:exp_boundary}
\Delta(x_j) = m_R - m_L = 3 \cdot 2^{e-p},
\end{equation}
and the ratio $\Delta_s(x_j)/\Delta(x_j)$ is bounded by
\begin{equation}
    \begin{aligned}
    &\frac{2^{e+1}(1 - 2^{-p-1})}{\sqrt{2}}2^{1-p} \leq \Delta_s(x_j) \leq \frac{2^{e+1}(1+2^{-p})}{\sqrt{2}}2^{1-p}\\
    \implies & \frac{2^{e+1}(1 - 2^{-p-1})}{3 \cdot 2^{e-p}\sqrt{2}}2^{1-p} \leq \frac{\Delta_s(x_j)}{\Delta(x_j)} \leq \frac{2^{e+1}(1+2^{-p})}{3 \cdot 2^{e-p}\sqrt{2}}2^{1-p}\\
    \implies & \frac{2\sqrt{2}(1 - 2^{-p-1})}{3} \leq \frac{\Delta_s(x_j)}{\Delta(x_j)}  \leq \frac{2\sqrt{2}(1+2^{-p})}{3}.
    \end{aligned}
\end{equation}
When $p = 1$, the lower bound is equal to $1/\sqrt{2}$ and the upper bound is equal to $\sqrt{2}$. The lower bound is monotonically increasing in $p$ while the upper bound is monotonically decreasing in $p$, and both converge to $2\sqrt{2}/3$ as $p \rightarrow \infty$, which is between $1/\sqrt{2}$ and $\sqrt{2}$. This means both bounds lie within $[1/\sqrt{2},\, \sqrt{2}]$, so for within exponent block bins and exponent boundary bins,
\begin{equation}\label{eq:ratio_bounds_appendix}
    -\frac{1}{2} \leq \log\left[\frac{\Delta_s(x_j)}{\Delta(x_j)}\right] < \frac{1}{2}.
\end{equation}

\paragraph{Bound on entropy difference from bin ratios}

Hence, for every $x_j\in \mathbb U\setminus (\mathbb{S} \cup \mathbb{O})$,
\begin{equation}
\abs{\log\bracket{\frac{\Delta_s(x_j)}{\Delta(x_j)}}} \leq \frac{1}{2}.
\end{equation}

The remaining points are exactly the overflow and underflow bins collected in $\mathbb{S} \cup \mathbb{O}$, where we do not claim a uniform pointwise bound and instead keep their contribution explicitly. By the triangle inequality and~\eqref{eq:ratio_bounds_appendix}:
\begin{equation}
\begin{aligned}
&\left|
\Ht(\Xv_{fp}) - \Ht_{s}(\Xv_{fp}) \right| 
= \Bigg | -\int_{\mathbb U^d} f_{\Xv}(\xv)\,\log\left[f_{\Xv}(\xv)\prod_{j=1}^d \Delta(x_j)\right]d\xv  
+ \int_{\mathbb R^d} f_{\Xv}(\xv)\,\log\left[f_{\Xv}(\xv)\prod_{j=1}^d \Delta_s(x_j)\right]d\xv \Bigg |\\
&= \Bigg | \sum_{j=1}^d \int_{\mathbb U\setminus\mathbb{S}}f_{X_j}(x_j) \,\log\abs{\frac{ \Delta_s(x_j)}{\Delta(x_j)}}dx_j 
+ \int_{\mathbb S^d \cup \mathbb O^d} f_{\Xv}(\xv)\,\log\left[f_{\Xv}(\xv)\prod_{j=1}^d \Delta_s(x_j)\right]d\xv \Bigg |\leq \frac{d}{2} + \epsilon.
\end{aligned}
\end{equation}
\end{proof}

\section{Computing the exact entropy of quantized random variables}\label{app:exact_entropy_float}

Here we compute the exact discrete entropy of quantized random variables, such as normally distributed random variables stored in floating-point numbers.

In this paper, we will use a simple normalized floating-point format with midpoint rounding that does not use subnormal numbers (as defined in \cite{muller2010handbookfpn}) and where zero is not in the representable set.

Let $p \in \N{}$, $E \in \N{} \cup \{0\}$, and $\alpha \in \R{}$. The following map is defined on $\R{}\setminus\{0\}$; since all distributions used in this paper are absolutely continuous, the event that the value zero is realized has probability zero. Define \begin{equation}
    \mathrm{round}_{p}(\alpha) \triangleq \paren{\argmin\limits_{i \in \{0, \dots, 2^{p-1}\}}\abs{\alpha - i2^{-(p-1)}}}2^{-(p-1)},
\end{equation} where ties in the $\argmin$ function are broken towards the higher index $i$. Assuming rounding to the nearest representable value, the value stored on a floating-point number can be written as
\begin{equation}
    x_{fp}(x) \triangleq s_{fp}(x) \times 2^{e_{fp}(x)} \times \paren{1 + m_{fp}(x)},
\end{equation}
where\footnote{$\tilde e_{fp}(x)$ must be introduced to handle cases where the mantissa value rounds up to the next exponent level. This occurs when $\mathrm{round}
      _{p}\paren{|x|2^{-\tilde e_{fp}(x)} - 1} = 1$.} \begin{equation}
\label{eq:sem_fp}
    \begin{aligned}
        s_{fp}(x) &\triangleq (-1)^{\indic{x < 0}},\\
        \tilde e_{fp}(x) &\triangleq -\paren{2^{E-1} - 1}\indic{\log|x| < -\paren{2^{E-1} - 1}}\\
        &+ \sum\limits_{i=0}^{2^E - 1}\bracket{i - \paren{2^{E-1} - 1}}\indic{i - \paren{2^{E-1} - 1} \leq \log|x| < i+1 - \paren{2^{E-1} - 1}} + 2^{E-1}\indic{\log|x| \geq 2^{E-1} + 1}\\
          m_{fp}(x) &\triangleq \mathrm{round}_{p}\paren{|x|2^{-\tilde e_{fp}(x)} - 1}\indic{\log|x| < 2^{E-1}+1}\indic{\mathrm{round}_{p}\paren{|x|2^{-\tilde e_{fp}(x)} - 1
     } \neq 1} \\
          &+ \paren{1 - 2^{-(p-1)} }\paren{\indic{\log|x| \geq 2^{E-1}+1} + \indic{2^{E-1} \leq \log|x| < 2^{E-1}+1}\indic{\mathrm{round}_{p}\paren{|x|2^{-\tilde e_{fp}(x)}
     - 1} = 1}}\\e_{fp}(x) &\triangleq \min\paren{2^{E - 1}, \tilde e_{fp}(x) + \indic{\mathrm{round}_{p}\paren{|x|2^{-\tilde e_{fp}(x)} - 1} = 1}}.
       \end{aligned}
   \end{equation}
The quantities in Eq.~\eqref{eq:sem_fp} are decoded numerical values: $s_{fp}(x)$ is a single sign bit, $e_{fp}(x)$ is the binary exponent for $E \geq 1$ (for $E=0$, the format has a single exponent level with implicit exponent value $1/2$), and $m_{fp}$ is the significand (or mantissa) which encodes the binary significant digits of $x$ in the form $1.m_{fp}(x)$; in the physical registers these are encoded as binary integers using $1$, $E$, and $(p-1)$ bits respectively \cite{muller2010handbookfpn, Goldberg1991_FPN_Intro}. Since each stored field is a finite set of non-negative integers constrained by the number of bits in the register, this structure can only represent a finite set $U_{fp}$ of representable numbers along the real number line. For example, the standard single-precision IEEE-754 format uses one bit for $s_{fp}$, 23 bits for $m_{fp}$, and eight bits for $e_{fp}$. Note that the format defined by Eq.~\eqref{eq:sem_fp} is an idealized normalized floating-point format: it does not include zero, subnormal numbers, infinities, or NaNs, and its exponent range ($e_{\min} = -(2^{E-1}-1)$, $e_{\max} = 2^{E-1}$) differs slightly from the IEEE-754 standard. 
The structure of a floating-point number is illustrated in Fig.~\ref{fig:fp_diag}. See \cite{muller2010handbookfpn, Goldberg1991_FPN_Intro} for further details on the floating-point representation.

\begin{theorem}[Entropy of a Clipped and Arbitrarily Midpoint Quantized Random Variable]\label{thm:quantized_ent}
Let $X$ be an absolutely continuous random variable with cumulative distribution function $F$. Assume there are $K$ representable values in the quantization scheme and denote the set of these values as $\{u_1, \dots, u_K\}$, where $u_1 < \dots < u_K$. If $X_Q$ is the resulting clipped and midpoint quantized representation of $X$, then the discrete entropy of $X_Q$ is
\begin{align}
    \begin{aligned}
        H(X_Q) &= -F\paren{\frac{u_1 + u_2}{2}}\log\bracket{F\paren{\frac{u_1 + u_2}{2}}}\\
        &- \sum_{i = 2}^{K-1}\bracket{F\paren{\frac{u_{i+1} + u_i}{2}} - F\paren{\frac{u_{i} + u_{i-1}}{2}}}\log\bracket{F\paren{\frac{u_{i+1} + u_i}{2}} - F\paren{\frac{u_{i} + u_{i-1}}{2}}}\\
        &-\bracket{1 - F\paren{\frac{u_{K-1} + u_K}{2}}}\log\bracket{1 - F\paren{\frac{u_{K-1} + u_K}{2}}}.
        \end{aligned}
\end{align}
when $K \geq 3$. When $K = 2$, $H(X_Q) = -F\paren{\frac{u_1 + u_2}{2}}\log\bracket{F\paren{\frac{u_1 + u_2}{2}}} - \bracket{1 -  F\paren{\frac{u_1 + u_2}{2}}}\log\bracket{1 -  F\paren{\frac{u_1 + u_2}{2}}}$, and when $K = 1$, $H(X_Q) = 0$.

\end{theorem}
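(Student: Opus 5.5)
The plan is to identify the probability mass of each quantization cell directly in terms of $F$ and then substitute these masses into the definition of discrete entropy in Eq.~\eqref{eq:exact_entropy} with $d=1$. Under clipped midpoint quantization, $X_Q = u_i$ exactly when $X$ falls in the cell whose interior endpoints are the midpoints between neighboring representable values. The outermost cells absorb all the clipped mass. So the cells are $(-\infty, \frac{u_1+u_2}{2})$ for $u_1$, $[\frac{u_i+u_{i-1}}{2}, \frac{u_{i+1}+u_i}{2})$ for $2\le i\le K-1$, and $[\frac{u_{K-1}+u_K}{2}, \infty)$ for $u_K$. The tie-breaking convention of $\mathrm{round}_p$ (toward the higher index) fixes which side of each cell is closed.

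Since $X$ is absolutely continuous, $F$ is continuous and every single point, in particular every midpoint, has probability zero. Therefore the tie-breaking convention does not affect the cell probabilities, and each cell mass is just a difference of CDF values: $p_1 = F\paren{\frac{u_1+u_2}{2}}$, $p_i = F\paren{\frac{u_{i+1}+u_i}{2}} - F\paren{\frac{u_i+u_{i-1}}{2}}$ for $2 \le i \le K-1$, and $p_K = 1 - F\paren{\frac{u_{K-1}+u_K}{2}}$. The cells partition $\R{}$, since the midpoints are strictly increasing because $u_1<\dots<u_K$. Hence these masses are nonnegative and telescope to one. Substituting them into $H(X_Q) = -\sum_i p_i\log p_i$, with the convention $0\log 0=0$ for empty cells, gives the displayed formula for $K\ge 3$.

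The degenerate cases follow the same way. For $K=2$ there are no interior cells, and the two cells are split at the single midpoint $\frac{u_1+u_2}{2}$, which gives the binary-entropy expression. For $K=1$ all mass sits on $u_1$, so $H(X_Q) = -1\log 1 = 0$.

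The only step needing real care is checking that the quantizer actually realizes these cells, i.e.\ that nearest-value rounding with clipping sends each $x$ to the $u_i$ whose midpoint cell contains it. This holds because for $x$ between $u_i$ and $u_{i+1}$ the nearest representable value switches exactly at their midpoint, and points beyond $u_1$ or $u_K$ are clipped to the extreme values. Any mismatch at exponent boundaries, or in how the floating-point rounding in Eq.~\eqref{eq:sem_fp} breaks ties, affects only a Lebesgue-null set, which is irrelevant under absolute continuity.
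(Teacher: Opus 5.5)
Your proposal is correct and matches the paper's proof essentially step for step. Like the paper, you express $X_Q$ through the midpoint cells, with the outer cells absorbing the clipped tails, compute each cell mass as a difference of CDF values, substitute into $-\sum_i p_i \log p_i$, and handle $K=1$ and $K=2$ separately. Your extra remarks on midpoints being null sets under absolute continuity, on the masses telescoping to one, and on the $0\log 0=0$ convention are harmless details that the paper leaves implicit.
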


\begin{proof}
Since $X_Q$ is a clipped and midpoint quantized representation of $X$, we have
\begin{equation}
    X_Q = \begin{cases}
    \sum\limits_{i = 2}^{K-1} u_i\indic{X \in \left[\frac{u_{i} + u_{i-1}}{2},\frac{u_{i+1} + u_i}{2}\right)} + u_1\indic{X < \frac{u_1 + u_2}{2}} + u_K\indic{X \geq \frac{u_{K-1} + u_K}{2}} \text{ if } K \geq 3\\
    u_1\indic{X < \frac{u_1 + u_2}{2}} + u_2\indic{X \geq \frac{u_1 + u_2}{2}} \text{ if } K = 2\\
    u_1 \text{ if } K = 1.
    \end{cases}
\end{equation}

For $K = 1$, we see that $P\{X_Q = u_1\} = 1$ so $H(X_Q) = 0$. For $K = 2$, $P\{X_Q = u_1\} = F\paren{\frac{u_1 + u_2}{2}} = 1 - P\{X_Q = u_2\}$, so 
\begin{equation}
H(X_Q) = -F\paren{\frac{u_1 + u_2}{2}}\log\bracket{F\paren{\frac{u_1 + u_2}{2}}} - \bracket{1 -  F\paren{\frac{u_1 + u_2}{2}}}\log\bracket{1 -  F\paren{\frac{u_1 + u_2}{2}}}.
\end{equation}

Finally, for $K \geq 3$,

\begin{equation}
\begin{aligned}
    P\{X_Q = u_i\} &=  \begin{cases} \int\limits_{\frac{u_{i} + u_{i-1}}{2}}^{\frac{u_{i+1} + u_i}{2}} f_X(x)dx \text{ if } i \in [2, K-1]\\
    \int\limits_{-\infty}^{\frac{u_1 + u_2}{2}} f_X(x)dx \text{ if }  i = 1\\
    \int\limits_{\frac{u_{K-1} + u_K}{2}}^{\infty} f_X(x)dx \text{ if }  i = K\\ 
    \end{cases}\\
    &=\begin{cases} F\paren{\frac{u_{i+1} + u_i}{2}} - F\paren{\frac{u_{i} + u_{i-1}}{2}} \text{ if } i \in [2, K-1]\\
    F\paren{\frac{u_1 + u_2}{2}}  \text{ if } i = 1\\
    1 - F\paren{\frac{u_{K-1} + u_K}{2}} \text{ if } i = K
    \end{cases}.
\end{aligned}
\end{equation}

Inserting this into the formula for discrete entropy allows us to derive the following
\begin{equation}
    \begin{aligned}
        H(X_Q) & = - \sum_{i = 1}^K P\{X_Q = u_i\} \log\bracket{P\{X_Q = u_i\}} \\
        & = -F\paren{\frac{u_1 + u_2}{2}}\log\bracket{F\paren{\frac{u_1 + u_2}{2}}}\\
        &- \sum_{i = 2}^{K-1}\bracket{F\paren{\frac{u_{i+1} + u_i}{2}} - F\paren{\frac{u_{i} + u_{i-1}}{2}}}\log\bracket{F\paren{\frac{u_{i+1} + u_i}{2}} - F\paren{\frac{u_{i} + u_{i-1}}{2}}}\\
        &-\bracket{1 - F\paren{\frac{u_{K-1} + u_K}{2}}}\log\bracket{1 - F\paren{\frac{u_{K-1} + u_K}{2}}}.
        \end{aligned}
\end{equation}
    
\end{proof}

\begin{figure}[htbp]
\centering
\begin{tikzpicture}[>=Stealth, thick]

\draw[black, thick] (-3.5, 0) -- (5.5, 0);

\draw[black, thick] (-2, 2.8) -- (-2, 0);
  \draw[black, thick] ( 1, 2.8) -- ( 1, 0);
  \draw[black, thick] ( 4, 2.8) -- ( 4, 0);

\draw[blue, very thick] (-0.5, 2.8) -- (-0.5, 0);
  \draw[blue, very thick] ( 2.5, 2.8) -- ( 2.5, 0);

\draw[black, -{Stealth}, very thick] (-0.5, 1.4) -- ( 1.0, 1.4);
  \fill[black] (-0.5, 1.4) circle (3pt);
  \draw[black, -{Stealth}, very thick] ( 2.5, 1.4) -- ( 4.0, 1.4);
  \fill[black] ( 2.5, 1.4) circle (3pt);

\draw[black, -{Stealth}, very thick] (-0.5, 0.8) -- (-2.0, 0.8);
  \fill[white] (-0.5, 0.8) circle (3pt);
  \draw[black, very thick] (-0.5, 0.8) circle (3pt);
  \draw[black, -{Stealth}, very thick] ( 2.5, 0.8) -- ( 1.0, 0.8);
  \fill[white] ( 2.5, 0.8) circle (3pt);
  \draw[black, very thick] ( 2.5, 0.8) circle (3pt);

\draw[black, -{Stealth}, very thick] (-3.5, 1.1) -- (-2.0, 1.1);
  \draw[black, -{Stealth}, very thick] ( 5.5, 1.1) -- ( 4.0, 1.1);

\node at (-2, -0.4) {$u_1$};
  \node at ( 1, -0.4) {$u_2$};
  \node at ( 4, -0.4) {$u_3$};
  \node[blue] at (-0.5, -0.9) {$\dfrac{u_1+u_2}{2}$};
  \node[blue] at ( 2.5, -0.9) {$\dfrac{u_2+u_3}{2}$};

\end{tikzpicture}

\caption{Clipping and midpoint quantization with $K=3$ representable values $\{u_1, u_2,u_3\}$. The blue vertical lines represent the midpoints, and the arrows depict the regions of the real line that map to each representable value at a black vertical line.}\label{fig:midpoint_quantization}
\end{figure}

\begin{corollary}[Entropy of a Random Variable Quantized to a Floating-point Number]\label{cor:gauss_fp_ent}
Let $X$ be an absolutely continuous random variable with cumulative distribution function $F$. Let $X_{fp}$ be the clipped and midpoint quantized floating-point representation of $X$ with an $E$-bit exponent and a $(p-1)$-bit significand. The discrete entropy of $X_{fp}$ is given by Theorem~\ref{thm:quantized_ent} where $K = 2^{E + p}$, 
\begin{align}
    u_i = \begin{cases}
    u^\prime_{i -2^{E + p - 1}}  \text{ if } i > 2^{E + p - 1}\\
    -u^\prime_{2^{E + p - 1} - (i-1)} \text{ otherwise},
\end{cases}
\end{align}
and 
\begin{align}
    u^\prime_i = 2^{\bracket{\left\lfloor (i-1)2^{-(p-1)}\right\rfloor - \paren{2^{E-1} - 1}}}\paren{1 + \bracket{(i-1)2^{-(p-1)} \bmod 1}}.
\end{align}
\end{corollary}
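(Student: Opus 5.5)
The plan is to reduce Corollary~\ref{cor:gauss_fp_ent} to Theorem~\ref{thm:quantized_ent}. Floating-point rounding as defined in Eq.~\eqref{eq:sem_fp} is exactly a clipped, midpoint quantizer onto a finite ordered set of representable values. So the corollary follows once two things are checked: that the set of representable values is $\{u_1,\dots,u_K\}$ with $K=2^{E+p}$ and the stated ordering, and that the map $x\mapsto x_{fp}(x)$ sends every $x$ to its nearest representable value, with clipping at both ends.

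\textbf{Step 1: enumerate the representable values.} The sign takes 2 values, the exponent takes $2^E$ values from $e_{min}=-(2^{E-1}-1)$ to $e_{max}=2^{E-1}$, and the mantissa takes $2^{p-1}$ values $m=k2^{-(p-1)}$ for $k=0,\dots,2^{p-1}-1$. This gives $2\cdot 2^E\cdot 2^{p-1}=2^{E+p}$ values. For the positive half, write $i-1=2^{p-1}q+k$ with $q=\lfloor (i-1)2^{-(p-1)}\rfloor$ and $k2^{-(p-1)}=(i-1)2^{-(p-1)}\bmod 1$. Then
\[
u'_i=2^{q-(2^{E-1}-1)}\paren{1+k2^{-(p-1)}}.
\]
I would show $u'_i$ is strictly increasing in $i$. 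Within an exponent block this is clear. Across a block boundary, the last value of block $q$ is $2^{e}(2-2^{-(p-1)})$, which is less than $2^{e+1}$, the first value of block $q+1$. Negating the positive values and reflecting the indices gives $u_1<\dots<u_{2^{E+p-1}}<0<u_{2^{E+p-1}+1}<\dots<u_K$, matching the case formula for $u_i$.

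\textbf{Step 2: verify midpoint rounding with clipping.} Fix $x>0$ in the exponent range, so $\tilde e_{fp}(x)=\lfloor\log x\rfloor=e$ and $|x|2^{-e}-1\in[0,1)$. The function $\mathrm{round}_p$ picks the nearest grid point on a grid of spacing $2^{-(p-1)}$ that includes the endpoint 1, breaking ties upward. The nearest representable value to $x$ is therefore the nearest point of $2^e(1+\text{grid})$. If rounding lands on 1, that point is $2^{e+1}$, which is the first value of the next block; the footnote's adjustment of $e_{fp}$ (increment the exponent, set the mantissa to 0) produces exactly this value.

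Within the block the representable values are the grid points. The point $2^{e+1}$ is the next representable value above the top of the block. So nearest-grid rounding coincides with the midpoint rule using the midpoints $(u_i+u_{i+1})/2$, including the asymmetric boundary bins. Ties occur on a measure-zero set, so they are irrelevant because $X$ is absolutely continuous.

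Next, the extremes. For $\log x<e_{min}$, the exponent is clamped to $e_{min}$ and the mantissa rounds to 0, giving the smallest positive value. This is the correct midpoint assignment, since the nearest value above 0 is $2^{e_{min}}$ and the bin between $0$ and $u_{2^{E+p-1}+1}$ has its midpoint-rule boundary at 0 by symmetry. For $x$ beyond the top block, the mantissa is set to $1-2^{-(p-1)}$ with exponent $e_{max}$, which is $u_K$; this is clipping. The sign bit handles $x<0$ symmetrically.

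\textbf{Step 3: apply the theorem.} Since $X_{fp}$ is the clipped midpoint quantization onto $\{u_i\}$, Theorem~\ref{thm:quantized_ent} applies with $K=2^{E+p}\ge 3$ and gives the stated entropy.

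The main obstacle is Step 2 at the edges: the top of the exponent range, where the rounding-up overflow must be clipped by the $\min(2^{E-1},\cdot)$ and the mantissa override; the underflow region; and the zero crossing. Rather than rely on the terse indicator formulas in Eq.~\eqref{eq:sem_fp}, I would handle these by a direct case check against the midpoint intervals, for example at $x=0$ and near $2^{e_{max}+1}$.
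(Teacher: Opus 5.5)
Your proposal is correct and follows the paper's proof: you enumerate the $2^{E+p-1}$ positive values block by block via $q=\lfloor (i-1)2^{-(p-1)}\rfloor$ and the fractional part, mirror them for the negatives, and invoke Theorem~\ref{thm:quantized_ent}. Your cross-block monotonicity check and your Step~2 (matching Eq.~\eqref{eq:sem_fp} to the midpoint rule) add rigor the paper skips, since the corollary already defines $X_{fp}$ as the clipped midpoint quantizer; note only that $K=2^{E+p}\ge 3$ fails for $E=0$, $p=1$, where the theorem's separate $K=2$ case applies instead.
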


\begin{proof}
We will explicitly construct the ordered sequence of representable values $\{u_1, u_2, \dots, u_K\}$ from the structure of the floating-point format given in Eq.~\eqref{eq:sem_fp}, then apply Theorem~\ref{thm:quantized_ent}. There are $2^{p-1}$ distinct mantissa values for each exponent and $2^E$ exponent values, giving $2^{E+p-1}$ positive floating-point values in total. Including the negative values by symmetry, there are $K = 2^{E+p}$ representable values altogether.
The smallest value of the exponent is $e_{\min} = -(2^{E-1}-1)$, while the largest value is $e_{\max} = 2^{E-1}$.

We first enumerate the $2^{E+p-1}$ positive floating-point values $u^\prime_i$ in increasing order, indexed by $i = 1, 2, \ldots, 2^{E+p-1}$. For each fixed exponent value, we exhaust all $2^{p-1}$ mantissa values before incrementing the exponent.
Concretely, for index $i$, the exponent index is $\lfloor (i-1)/2^{p-1} \rfloor$, which starts at zero and steps up by one exactly every $2^{p-1}$ values of $i$. The mantissa index within that exponent block is $(i-1) \bmod 2^{p-1}$, which cycles through $0, 1, \ldots, 2^{p-1}-1$ repeatedly. Translating the exponent index into the true exponent by adding $e_{\min}$, and the mantissa index into its fractional value by multiplying by $2^{-(p-1)}$, gives
\begin{align}
    u^\prime_i = 2^{\left\lfloor (i-1)2^{-(p-1)}\right\rfloor - (2^{E-1}-1)}\Bigl(1 + \bracket{(i-1)2^{-(p-1)} \bmod 1}\Bigr).
\end{align}

The full sequence $u_1 < u_2 < \cdots < u_K$ must enumerate the negative and positive floating-point values in increasing order. Since the negative floating-point values are the mirror image of the positive ones, the most negative value corresponds to $-u^\prime_{2^{E+p-1}}$ and the least negative to $-u^\prime_1$. Therefore, for $i \leq 2^{E+p-1}$, we set
\begin{align}
    u_i = -u^\prime_{2^{E+p-1} - (i-1)},
\end{align}
which enumerates the negative values in increasing order. For $i > 2^{E+p-1}$, we set
\begin{align}
    u_i = u^\prime_{i - 2^{E+p-1}},
\end{align}
which enumerates the positive values in increasing order. With these $u_i$ and $K = 2^{E+p}$, Theorem~\ref{thm:quantized_ent} can be used to compute $H(\Xfp)$, completing the proof.
\end{proof}

\newpage
\section{Exact vs.\ Approximate Floating-point Entropy Figures}\label{app:exact_FPN_extrafigs}

\begin{figure}[h]
  \centering
  \begin{subfigure}{0.24\textwidth}
    \centering
    \includegraphics[width=\linewidth]{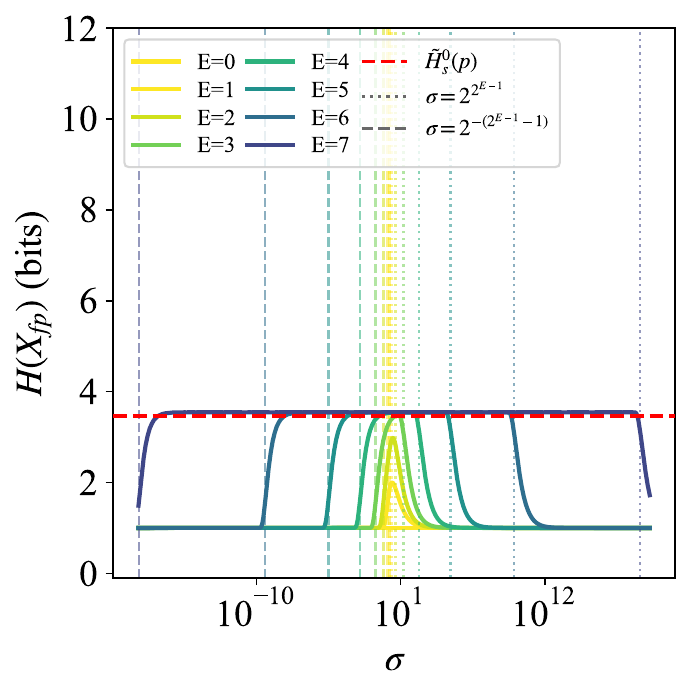}
    \caption{$p=1$.}
    \label{fig:app_sigma_p1}
  \end{subfigure}
  \begin{subfigure}{0.24\textwidth}
    \centering
    \includegraphics[width=\linewidth]{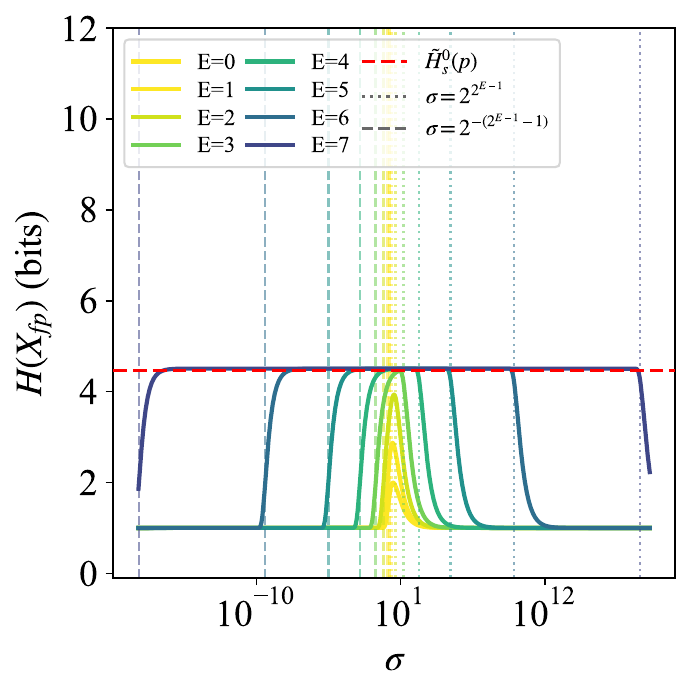}
    \caption{$p=2$.}
    \label{fig:app_sigma_p2}
  \end{subfigure}
  \begin{subfigure}{0.24\textwidth}
    \centering
    \includegraphics[width=\linewidth]{code/quantized_entropy_midpoint/sigma_p_E/entropy_sigma_midpoint_p3.pdf}
    \caption{$p=3$.}
    \label{fig:app_sigma_p3}
  \end{subfigure}
  \begin{subfigure}{0.24\textwidth}
    \centering
    \includegraphics[width=\linewidth]{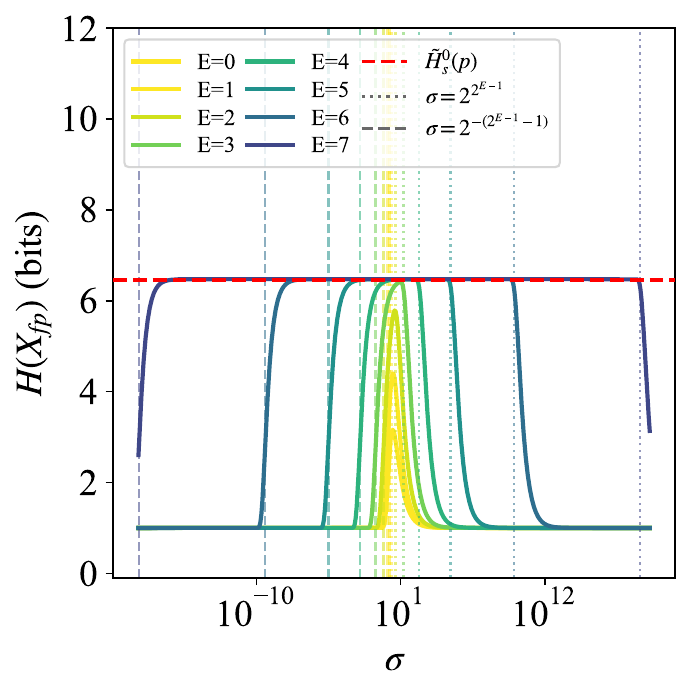}
    \caption{$p=4$.}
    \label{fig:app_sigma_p4}
  \end{subfigure}
  \begin{subfigure}{0.24\textwidth}
    \centering
    \includegraphics[width=\linewidth]{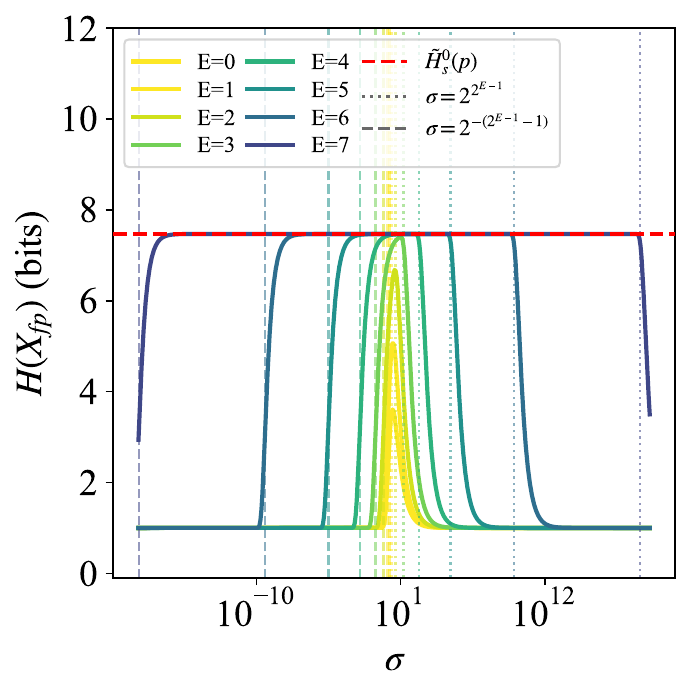}
    \caption{$p=5$.}
    \label{fig:app_sigma_p5}
  \end{subfigure}
  \begin{subfigure}{0.24\textwidth}
    \centering
    \includegraphics[width=\linewidth]{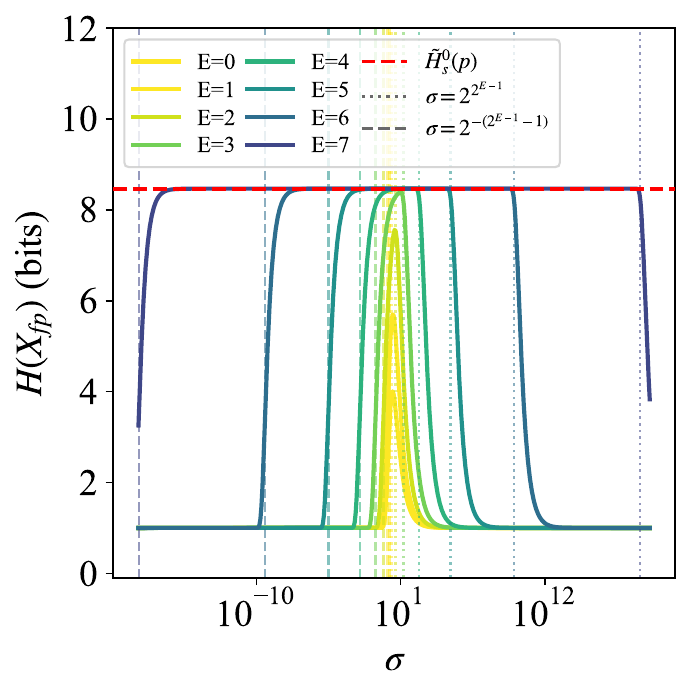}
    \caption{$p=6$.}
    \label{fig:app_sigma_p6}
  \end{subfigure}
  \begin{subfigure}{0.24\textwidth}
    \centering
    \includegraphics[width=\linewidth]{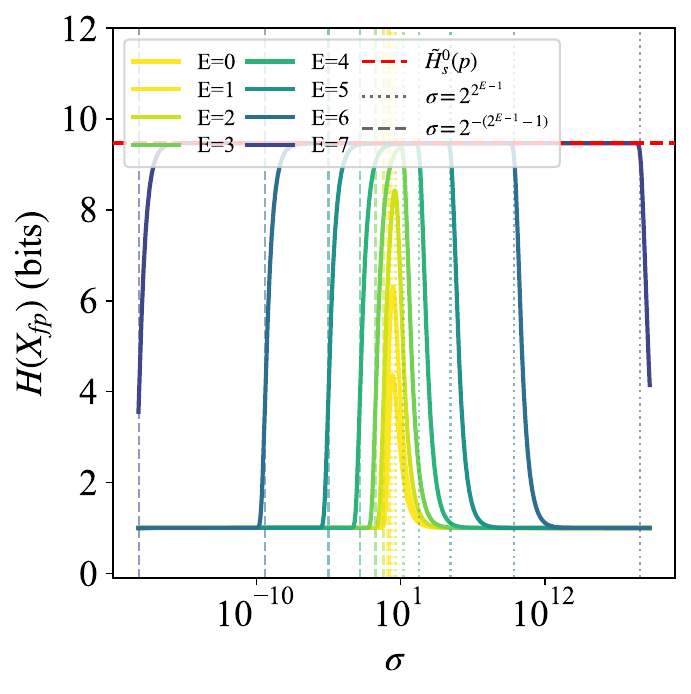}
    \caption{$p=7$.}
    \label{fig:app_sigma_p7}
  \end{subfigure}
  \begin{subfigure}{0.24\textwidth}
    \centering
    \includegraphics[width=\linewidth]{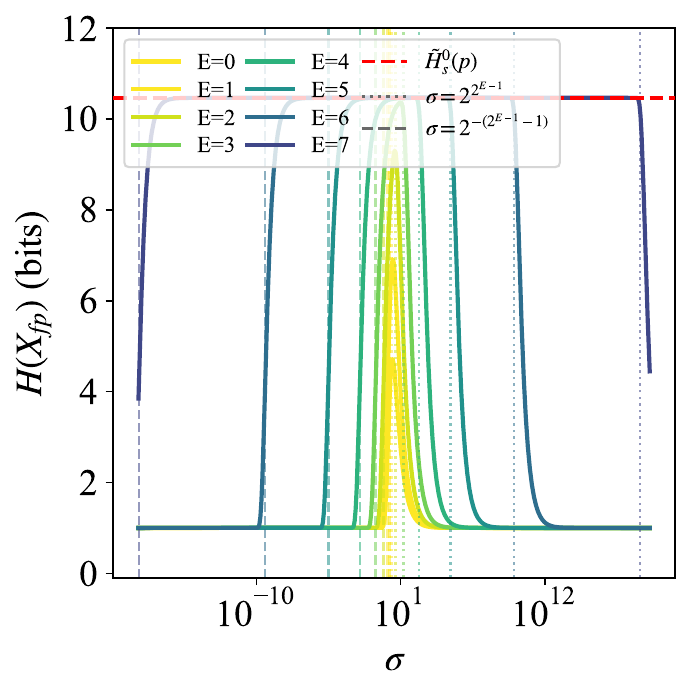}
    \caption{$p=8$.}
    \label{fig:app_sigma_p8}
  \end{subfigure}
  \caption{\textit{Exact midpoint-quantized entropy vs.\ standard deviation $\sigma$.} For each precision $p \in \{1,\ldots,8\}$, the exact discrete entropy $H(X_{fp})$ of $X \sim \gauss{0}{\sigma^2}$ (with $\mu=0$) is plotted as a function of $\sigma$ over a wide log-scale range. Each curve corresponds to a distinct value of exponent bits $E \in \{0,1,\ldots,7\}$. The vertical dashed lines mark $\sigma = 2^{e_{\min}}$ and the vertical dotted lines mark $\sigma = 2^{e_{\max}}$ for each $E$, and the horizontal red line shows the approximation. These plots were generated by sweeping $\sigma$ over 500 log-spaced points and computing the exact entropy via Corollary~\ref{cor:gauss_fp_ent}.}
  \label{fig:app_entropy_vs_sigma_all}
\end{figure}

\end{document}